\documentclass[journal,12pt,onecolumn,draftclsnofoot]{IEEEtran}

\usepackage{graphicx}%
\usepackage{subcaption}
\usepackage{multirow}%
\usepackage{amsmath,amssymb,amsfonts}%
\usepackage{amsthm}%
\usepackage{mathrsfs}%
\usepackage{xcolor}%
\usepackage{textcomp}%
\usepackage{manyfoot}%
\usepackage{booktabs}%
\usepackage{algorithm}%
\usepackage{algorithmicx}%
\usepackage{algpseudocode}%
\usepackage{listings}%
\usepackage{makecell}
\usepackage{float}%
\usepackage{comment}
\theoremstyle{thmstyleone}%
\newtheorem{theorem}{Theorem}

\newtheorem{proposition}[theorem]{Proposition}%

\theoremstyle{thmstyletwo}%
\newtheorem{remark}{Remark}%

\theoremstyle{thmstylethree}%

\begin{document}

\title{Neilson's Weak vs. Strong Loss Aversion: A Characterization and a Generalized CPT-Utility Function}

\author{Symeon~Vaidanis~and~Marios~Kountouris
\thanks{S. Vaidanis is with the Communication Systems Department of EURECOM, Sophia Antipolis, 06410 France, and M. Kountouris is with the Communication Systems Department of EURECOM, Sophia Antipolis, 06410 France and Department of Computer Science and Artificial Intelligence of University of Granada, C/ Periodista Daniel Saucedo Aranda, Granada, 18071 Spain. E-mail: symeon.vaidanis@eurecom.fr, marios.kountouris@eurecom.fr, mariosk@ugr.es.}}

\maketitle

\begin{abstract}
In multi-objective and multi-criteria decision-making under risk, especially in settings involving individual behavior, risk-aware analysis based on subjective evaluation has become increasingly important. Moving beyond risk-neutral modeling and the constraints of Expected Utility Theory (EUT), Cumulative Prospect Theory (CPT) provides a behaviorally grounded framework for capturing how individuals perceive and evaluate risky prospects. This paper conducts a rigorous theoretical analysis of Neilson’s definitions of aversion. We provide a gamble-based interpretation, sharpen key conceptual distinctions, and make explicit the conditions under which the weak and strong notions coincide as well as when they diverge. Furthermore, we examine the K\"{o}bberling--Wakker utility function and related standard CPT specifications, highlighting \emph{structural limitations and inconsistencies} that arise when these forms are required to satisfy Neilson-type aversion conditions. To address these issues, we propose a generalized CPT-utility function that retains the canonical reference-dependent shape while offering additional flexibility. This generalization extends the descriptive scope of CPT and provides an explicit functional form that is useful for sensitivity analysis and utility function-based optimization.
\end{abstract}

\begin{IEEEkeywords}
Cumulative Prospect Theory, Loss Aversion, Reference-Dependent Preferences, Aversion Measures, Utility Function
\end{IEEEkeywords}

\section{Introduction}\label{Section1}
Decision-making under risk often requires comparing outcomes relative to a benchmark and assessing asymmetric responses to gains and losses. This paper focuses on that reference-dependent component of risky choice, with particular emphasis on loss-aversion conditions in Cumulative Prospect Theory (CPT). Rather than providing an empirical account of individual behavior, the analysis develops formal characterizations of Neilson-type aversion and examines the compatibility of standard CPT utility-function specifications with those characterizations.

Building on the foundational work of D. Bernoulli, J. von Neumann, O. Morgenstern, F. P. Ramsey, and L. J. Savage, among others, Expected Utility Theory (EUT) \cite{EUT} has profoundly shaped modern economic theory, statistical reasoning, and decision science by providing a normative framework for rational choice under uncertainty from a subjective perspective. Despite its elegance and conceptual appeal, EUT has faced significant empirical challenges, as numerous studies have shown that individuals often deviate systematically from its predictions. 

In response to these empirical limitations, Prospect Theory (PT) was introduced by \cite{Kahneman_Tversky_1979} as a descriptive model that accounts for key behavioral regularities, offering a more psychologically grounded perspective on how people make decisions under risk. PT extends EUT by introducing a reference-dependent framework that distinguishes gains from losses relative to a reference point. More formally, PT partitions the domain of the utility function into gains and losses, assumes risk aversion in the gain domain and risk seeking in the loss domain, and applies nonlinear transformations to outcomes and probabilities. Moreover, PT introduces probability weighting functions (PWFs), which distort objective probabilities to reflect the subjective perception of event occurrence. These functions capture the empirically observed tendencies of individuals to overweight low-probability events and underweight high-probability ones, patterns that traditional models fail to explain.
To address limitations of the original formulation, particularly in the treatment of prospects with multiple ranked outcomes, Cumulative Prospect Theory (CPT) was later developed by \cite{Kahneman_Tversky_1992} and further refined through the contributions of \cite{Kobberling_Wakker}, \cite{Prelec}, and \cite{Neilson_2002_A}. CPT extends PT by introducing cumulative probability weighting, thereby avoiding violations of first-order stochastic dominance.

As a result, CPT has become a central model in behavioral decision theory, with widespread applications in economics, finance, and psychology for explaining systematic deviations from rational choice. Although the present paper is theoretical in focus, CPT has also been applied in areas such as reinforcement learning \cite{CPT_RL}, control theory \cite{CPT_Control}, and telecommunications \cite{CPT_Alouini}. More recently, CPT has been used in the emerging field of goal-oriented semantic information theory \cite{kountouris2021semantics}, where concepts such as risk sensitivity and subjective outcome evaluation are used to characterize the semantic relevance and perceived value of information in telecommunication systems \cite{vaidanis2025ICC, vaidanis2025SPAWC}.

In this paper, we clarify foundational definitions and address persistent theoretical ambiguities in the literature, particularly within Neilson’s taxonomy of loss aversion \cite{Neilson_2002_A, Neilson_2007, Neilson_2017}. Our analysis shows that binary gambles serve not merely as illustrative tools, but as a tractable analytical foundation for characterizing and comparing core CPT aversion notions. Specifically, we establish formal links between Neilson's weak and strong forms of loss aversion, identifying the conditions under which these notions converge or diverge for generalized utility functions.
Furthermore, within both general utility theory and the CPT framework, we identify key limitations of the K\"{o}bberling--Wakker utility-function specification, including the reduction in amplitude as the parameters $\alpha,\beta$ increase and the constant absolute risk aversion (CARA) property, as well as limitations in standard CPT application settings, particularly the limited ability of standard parametric forms to capture asymmetric outcomes that are perceived as symmetric and to model behavior on bounded domains. To address these issues, we propose a broader and more flexible class of utility functions that generalizes the K\"{o}bberling--Wakker formulation while preserving behavioral consistency. This enhanced framework extends the class of admissible parametric CPT representations and yields a tractable parametric class for theoretical analysis. Taken together, these contributions enrich the theoretical foundations of CPT by introducing structurally grounded and behaviorally coherent extensions, and they strengthen the decision-theoretic interpretation of loss aversion under uncertainty.

In methodological terms, our analysis is conducted in a reference-dependent setting and focuses on the utility-function component of CPT. Although CPT also relies on probability weighting to model the subjective distortion of probabilities, the main results of this paper concern the characterization of loss aversion through properties of the utility function, with binary gambles serving as the primary analytical device. This focus allows us to isolate the role of local and global shape restrictions in the comparison of aversion notions and in the construction of tractable parametric specifications.

More precisely, we derive characterization results for Neilson’s weak and strong notions of loss aversion, including geometric and gamble-based interpretations, and we identify the additional tail-slope condition under which the two notions coincide for nonsmooth S-shaped utility functions. These results also clarify when the equivalence fails outside those conditions. We then use these theoretical insights to reassess standard parametric CPT-utility functions, highlighting restrictions that limit their ability to represent certain aversion patterns, and we propose a generalized K\"{o}bberling--Wakker-type class that preserves the canonical reference-dependent structure while allowing greater flexibility for theoretical analysis and comparative applications.

The paper is organized as follows. Section~\ref{Section2} reviews the CPT framework and fixes notation. Section~\ref{Section3} analyzes Neilson’s definitions of loss aversion and derives equivalence and non-equivalence results. Section~\ref{Section4} examines limitations of the K\"{o}bberling--Wakker utility-function specification and related standard CPT parametric forms, and then proposes an extended K\"{o}bberling--Wakker class that addresses these restrictions. Section~\ref{Section5} concludes.

\section{Cumulative Prospect Theory: Framework and Notation} \label{Section2}
In this section, we provide an overview of the mathematical framework underlying Cumulative Prospect Theory (CPT) \cite{Kahneman_Tversky_1992,WakkerTve93,ChaWak99}. Our goal is to introduce the key components of CPT, including utility functions, probability weighting functions, and decision weights. 
This section provides the notation and definitions used in the remainder of the paper and prepares the theoretical analysis in Sections~\ref{Section3} and \ref{Section4}.

CPT offers a compelling alternative to EUT and can be viewed as a generalization of EUT for modeling decision-making under risk. CPT provides a framework for evaluating choices among a finite set of prospects, represented as a collection of real-valued random variables $\mathcal{G}$. For each gamble $R \in \mathcal{G}$, the agent assigns a deterministic valuation $V(R)$ based on the CPT framework. The decision rule is to choose the prospect with the highest subjective valuation: $R^* = \arg\max_{R \in \mathcal{G}} V(R)$.

CPT introduces the following key features that depart from the assumptions of EUT: (i) \textit{reference dependence}, (ii) \textit{loss aversion}, and (iii) \textit{rank dependence and cumulative probability weighting}. Moreover, CPT incorporates the following two behavioral regularities:
\begin{itemize}
\item \textit{Diminishing sensitivity to outcomes}: sensitivity to changes in outcomes diminishes as the distance from the reference point increases, for both gains and losses. As we detail below, this implies that the utility function is concave for gains and convex for losses, reflecting decreasing marginal sensitivity in both domains. This property, grounded in a quantitative analysis of the utility-function component, is consistent with the probabilistic (prospect-based) analysis based on Jensen's inequality. In the original work of Kahneman and Tversky \cite{Kahneman_Tversky_1979}, this probabilistic (prospect-based) analysis was used to justify the concavity and convexity properties of the utility function on the gain and loss domains, respectively.
\item \textit{Probabilistic weighting}: individuals do not perceive and respond to objective probabilities in a linear manner. Instead, they tend to overweight small probabilities and underweight large ones. This behavior is captured by the probability weighting function in CPT, which reflects systematic distortions in the perception of probability.
\end{itemize}
It is important to note that \textit{diminishing sensitivity} is conceptually related to, but distinct from, \textit{loss aversion}: while loss aversion refers to the asymmetric valuation of losses relative to gains, diminishing sensitivity pertains to the curvature of the utility function within each domain. In the analysis developed later in the paper, we show that these concepts can also come into tension: under Neilson’s formulation, certain saturation features associated with diminishing sensitivity are incompatible with the corresponding aversion conditions. This point will be made precise in Section~\ref{Section3} through the characterization of aversion in terms of non-symmetric binary gambles. On the other hand, probabilistic sensitivity is directly tied to the probability weighting function, highlighting the departure from the linear treatment of probabilities assumed in EUT.

A decision maker, or more generally an agent, is characterized by a reference point $x_0 \in \mathbb{R}$, a corresponding utility function $u: \mathbb{R} \to \mathbb{R}$, and a pair of probability weighting functions $w^{+}, w^{-}: [0,1] \to [0,1]$, corresponding to gains and losses, respectively. These components jointly capture how the agent evaluates outcomes and perceives uncertainty. We refer to the triple $(x_0, u, w^{\pm})$ as the CPT specification of the agent.

\subsection{Reference Dependence and the CPT-utility function}
In CPT, outcomes are evaluated relative to a specified \emph{reference point} $x_0$. This reference point represents a benchmark level, such as an acquired or expected outcome, the status quo, an operating level, or a satisfaction level, against which gains and losses are evaluated. The choice of $x_0$ may vary across individuals, contexts, and application scenarios, depending on psychological or situational factors.

The domain of the utility function $u(x)$ is partitioned relative to $x_0$ into two regions: the loss domain $x < x_0$ and the gain domain $x \geq x_0$. In the loss domain, outcomes are assigned negative value, i.e., $u(x) < 0, \forall x < x_0$, and typically $\lim _{x \to x_0^{-}}u(x) \leq 0$. In the gain domain, outcomes are valued positively, i.e., $u(x)>0, \forall x > x_0$, and typically $\lim _{x \to x_0^{+}}u(x) \geq 0$. 
In general, the value at the reference point $u(x_0)$ can lie anywhere in the interval defined by $\lim _{x \to x_0^{-}}u(x) \leq u(x_0) \leq \lim _{x \to x_0^{+}}u(x)$, so the utility function may exhibit a jump discontinuity at $x_0$, i.e., $\lim_{x \to x_0^-} u(x) \neq \lim_{x \to x_0^+} u(x)$. This may make the utility function $u$ non-differentiable or non-smooth at $x_0$. However, in standard formulations of CPT, it is customary to normalize the reference point such that $\lim _{x \to x_0^{-}}u(x) = \lim _{x \to x_0^{+}}u(x) = u(x_0) = 0$, which simplifies the analysis and reflects the idea that the reference point represents a neutral benchmark, neither a gain nor a loss.

This reference-dependent evaluation is a key departure from EUT, where utility is typically defined over absolute outcomes and evaluated through a single utility index. In contrast, CPT’s distinction between gains and losses introduces systematic asymmetries in valuation, which are essential for capturing a wide range of empirically observed decision-making patterns under risk.

\subsection{Risk Sensitivity and Curvature of the Utility Function}
Assuming monotonicity, the utility function is taken to be strictly increasing over its entire domain, i.e., $u(x_1) < u(x_2), \; \forall x_1 < x_2$. As a result, the first derivative of the utility function satisfies:
\begin{equation}
\frac{\partial{u}}{\partial{x}} > 0, \quad \forall \; x \in \mathbb{R} \setminus \{x_0\}.
\end{equation}
The curvature of the utility function (marginal utility), reflected in its second derivative, characterizes the agent’s attitude toward risk (local risk attitude) and captures how the agent perceives changes of varying magnitudes within each subdomain (gains or losses).
In the gain domain ($x > x_0$):
\begin{itemize}
    \item A convex utility function indicates increasing marginal utility, meaning that the agent perceives larger gains (further from $x_0$) as increasingly valuable.
    \item A concave utility function indicates diminishing marginal utility, meaning that the agent is more sensitive to changes (gains) closer to $x_0$.   
\end{itemize} 
In the loss domain ($x < x_0$), the interpretation is reversed:
\begin{itemize}
    \item A convex utility function reflects diminishing sensitivity to larger losses (farther from $x_0$),
    \item whereas a concave function implies increasing sensitivity to such losses.
\end{itemize}
This differs from the standard assumption in EUT, where the utility function is typically assumed to be globally concave.

Furthermore, we consider a decision-maker choosing between an $N$-outcome gamble $X$ with probability distribution $\mathcal{P}$ and a sure (certain) outcome $x_1$, under the fair-gamble condition $\mathbb{E}_\mathcal{P}(X) = x_1$. The outcomes of the gamble lie within a single subdomain (i.e., entirely in the gain domain or entirely in the loss domain). In this setting:
\begin{itemize}
    \item A concave utility function represents a \emph{risk-averse} attitude, since the agent prefers the sure (certain) outcome to the gamble; by Jensen's inequality this is expressed as $\mathbb{E}_P(u(X))< u(x_1) = u(\mathbb{E}_P(X))$.
    \item A convex utility function represents \emph{risk-seeking} behavior, since the agent prefers the gamble to the sure outcome; by Jensen's inequality this is expressed as $\mathbb{E}_P(u(X)) > u(x_1) = u(\mathbb{E}_P(X))$.
\end{itemize}
If the agent’s preferences are invariant to the scale of outcomes, the corresponding utility function is linear, indicating risk neutrality. In contrast, a constant utility function models an agent who is completely insensitive to changes in outcomes, and thus expresses indifference across all values.
Within the classical CPT framework, the utility function is typically assumed to be S-shaped, satisfying:
\begin{align*}
\frac{\partial^2 u}{\partial x^2} &\leq 0 \quad \forall, x > x_0 \quad &\text{(concave in the gain domain)}, \\
\frac{\partial^2 u}{\partial x^2} &\geq 0 \quad \forall, x < x_0 \quad &\text{(convex in the loss domain)}.
\end{align*}
This structure captures the principle of diminishing marginal sensitivity: agents are more sensitive to changes close to $x_0$ and less sensitive as outcomes move further into the gain or loss regions.

\subsection{Certainty Equivalents and Risk Premia (Arrow–Pratt)}
In the framework of risk analysis developed by Arrow and Pratt \cite{Arrow,Pratt}, the main tools are the notions of certainty equivalent and risk premium. The certainty equivalent of a gamble $X$ with probability distribution $P$ is defined as $x_{CE} = u^{-1}\bigl(\mathbb{E}_{\mathcal{P}}[u(X)]\bigr)$, and can be interpreted as the sure amount that delivers the same level of utility as the expected utility of the gamble under the given utility function. The risk premium can be divided into two cases: the absolute risk premium and the relative risk premium, which are defined as follows:
\begin{subequations}
    \begin{equation}
         \pi_A = \mathbb{E}_P(X) - x_{CE}
    \end{equation}
    \begin{equation}
        \pi_R = \frac{\pi_A}{\mathbb{E}_P(X)} = 1 - \frac{x_{CE}}{\mathbb{E}_P(X)}
    \end{equation}
\end{subequations}
Here, $\pi_A$ denotes the absolute risk premium and $\pi_R$ denotes the relative risk premium, both expressed in terms of the certainty equivalent $x_{CE}$ and the expected value $\mathbb{E}_{\mathcal{P}}(X)$ of the gamble.

In their original papers, Arrow and Pratt derive local approximations for the risk premia by combining Taylor expansions around the certainty equivalent and around the mean of the gamble, and then taking expectations. In this way, the absolute and relative risk premia are obtained in terms of the local curvature of the utility function as follows:
\begin{subequations}
    \begin{equation}
        \pi_A \approx - \frac{1}{2} \cdot \frac{u''(\mathbb{E}_{\mathcal{P}}(X))}{u'(\mathbb{E}_P(X))} \cdot \sigma_X^2
        = \frac{1}{2} A(\mathbb{E}_{\mathcal{P}}(X)) \cdot \sigma_X^2
        \;,\;
        A(x) = - \frac{u''(x)}{u'(x)}
    \end{equation}
    \begin{equation}
        \pi_R \approx - \frac{1}{2} \cdot \frac{u''(\mathbb{E}_{\mathcal{P}}(X)) \cdot \mathbb{E}_{\mathcal{P}}(X)}{u'(\mathbb{E}_{\mathcal{P}}(X))} \cdot \frac{\sigma_X^2}{\mathbb{E}_{\mathcal{P}}^2(X)}
        = \frac{1}{2} R(\mathbb{E}_{\mathcal{P}}(X)) \cdot \frac{\sigma_X^2}{\mathbb{E}_{\mathcal{P}}^2(X)}
        \;,\;
        R(x) = - \frac{x\,u''(x)}{u'(x)}
    \end{equation}
\end{subequations}
where $\sigma_X^2$ is the variance (or, in this local approximation, the second central moment) of the gamble, and $A(x)$ and $R(x)$ are the Arrow--Pratt coefficients of absolute (ARA) and relative risk aversion (RRA), respectively. A utility function exhibits constant absolute risk aversion (CARA) if $A(x)$ is constant; equivalently, it is an affine transformation of an exponential function, i.e., $u(x)=\mu+\kappa \exp(\alpha x)$, in which case $A(x)=-\alpha$.

In the case of a fair decision problem between a certain outcome (stationary state) $x$ and a symmetric binary gamble around it $\{x-\delta,\frac{1}{2};x+\delta,\frac{1}{2}\}$, the absolute risk premium is defined as follows:
\begin{equation}
    u(x+\pi_A(x,\delta,u)) = \frac{1}{2}u(x+\delta) + \frac{1}{2}u(x-\delta).
\end{equation}
Following \cite{utility_premium_fundamental}, the utility premium for the previously defined decision setup is defined as
\begin{equation}
    \mathcal{U}(x,\delta) = u(x) - \left( \frac{1}{2}u(x+\delta) + \frac{1}{2}u(x-\delta) \right).
\end{equation}
Note that the component $\mathcal{E}(x,\delta) = \frac{1}{2}u(x+\delta) + \frac{1}{2}u(x-\delta)$ is the mean utility of the gamble.
Moreover, for the same decision setup, the probability premium $\gamma$ is the excess in winning probability over the fair condition that makes the agent indifferent between the stationary state and the binary gamble, \cite{microeconomic_theory}.
\begin{equation}
    u(x) = \left(\frac{1}{2} + \gamma(x,\delta)\right) u(x+\delta) + \left(\frac{1}{2} - \gamma(x,\delta)\right)u(x-\delta)
\end{equation}

\subsection{Loss Aversion: Definitions and Benchmark Conditions.}
The CPT-utility function exhibits \textit{loss aversion}, which refers to the phenomenon whereby agents are more sensitive to losses than to equivalent gains. The utility function is steeper for losses than for gains, reflecting a behavioral asymmetry in which individuals (agents) exhibit a stronger sensitivity to losses. 

Kahneman and Tversky introduced a related behavioral condition known as \emph{symmetric bet aversion} \cite{Kahneman_Tversky_1979}. This condition captures the idea that an agent strictly prefers not to engage in a symmetric gamble that offers equal magnitude gains and losses around the reference point. It is formally defined as:
\begin{equation}
    u(x_0 + \delta) + u(x_0 -\delta) < 0 \;,\; \forall \delta>0 \;,\; u(x_0)=0.
\end{equation} 
This condition implies that all symmetric fair gambles around the reference point are strictly rejected in favor of maintaining the status quo, reflecting a preference for certainty over balanced risk. By dividing each side by $\delta$ and taking the limit $\delta \to 0^+$, one obtains that the marginal utility for losses exceeds that for gains at the reference point. Formally, this is expressed as: $u'(x_0^+) < u'(x_0^-)$, where $u'(x_0^+)$ and $u'(x_0^-)$ denote the right- and left-hand derivatives of the utility function at the reference point $x_0$, respectively.

They also introduced a stronger version of this condition, known as \emph{increasing symmetric bet aversion}, which imposes an ordering on the aversion to symmetric bets of increasing magnitude. It is expressed as: 
\begin{equation}
    u(x_0 + \delta_1) + u(x_0 - \delta_1) < u(x_0 + \delta_2) + u(x_0 - \delta_2) \;,\; \forall 0 \leq \delta_2 < \delta_1.
\end{equation} 
This condition states that the disutility of a symmetric gamble becomes more pronounced as the stakes increase. Under suitable regularity assumptions (e.g., differentiability of $u$), it can be reformulated in differential terms as:
\begin{equation}
\frac{\partial{u}}{\partial{x}} \bigg\vert _{x = x_0 + \delta} < \frac{\partial{u}}{\partial{x}} \bigg\vert _{x = x_0 - \delta} \;,\; \forall \delta>0 \;,\; u(x_0) = 0.
\end{equation}
This inequality reinforces the asymmetry between gains and losses, emphasizing that marginal sensitivity to losses consistently exceeds that to gains for all deviations from the reference point.
Moreover, this condition implies that the degree of aversion to symmetric fair gambles increases monotonically with the magnitude of the deviation $\delta$, i.e., the rejection of such gambles becomes stronger as the absolute size of the gamble increases.

\cite{Neilson_2002_A} extended the standard definitions by introducing two refined concepts: \emph{Neilson's weak loss aversion} and \emph{Neilson's strong loss aversion}.
\emph{Neilson's weak loss aversion} is defined by the condition
\begin{equation}
    \frac{u(z)}{z-x_0} \leq \frac{u(y)}{y-x_0} \;,\; \forall y < x_0 < z,
\end{equation}
while \emph{Neilson's strong loss aversion} strengthens this notion by comparing local marginal utilities:
\begin{equation}
    \frac{\partial{u}}{\partial{x}} \bigg\vert _{x = z} \leq \frac{\partial{u}}{\partial{x}} \bigg\vert _{x = y} \;,\; \forall y < x_0 < z.
\end{equation} 
\emph{Neilson's weak loss aversion} implies the rejection of all non-symmetric, fair binary gambles around the reference point in favor of maintaining the status quo. \emph{Neilson's strong loss aversion} implies \emph{Neilson's weak loss aversion}. Under the regularity assumptions used below, the two notions coincide only when an additional no-saturation, or tail-slope, condition rules out flattening in the loss tail. Moreover, it is worth noting that most definitions of loss aversion in the literature (with the exception of \emph{Neilson's strong aversion}) are based on the quantitative intensity (utility-premium approach) of rejection of a specific fair gamble, as captured by Jensen's inequality. More precisely, the function $\mathbb{E}(x_0,\delta_1,\delta_2) = \frac{\delta_1}{\delta_1 + \delta_2} \cdot u(x_0 + \delta_2) + \frac{\delta_2}{\delta_1 + \delta_2} \cdot u(x_0 - \delta_1)$, which is the expected utility of a fair non-symmetric binary gamble, should have the following properties in each case:
\begin{itemize}
    \item \emph{Symmetric bet aversion}: $ \mathbb{E}(x_0,\delta,\delta) < u(x_0)$,
    \item \emph{Increasing symmetric bet aversion}: $ \frac{\partial \mathbb{E}(x_0,\delta,\delta) }{\partial \delta}< 0$,
    \item \emph{Neilson's weak loss aversion}: $\mathbb{E}(x_0,\delta_1,\delta_2) \leq u(x_0)$.
\end{itemize}

\subsection{Comparative Risk Sensitivity Measures}
A probabilistic risk-sensitivity framework was proposed in \cite{vaidanis2025Theoretical_Part_2_Probabilistic_Risk_Sensitivity}, offering several advantages over the utility-premium approach and providing a different perspective from the classical risk-premium and probability-premium framework \cite{Arrow, Pratt}. This new metric is more decision-making oriented than the metrics of the bibliography, which try to express the structure of fair decision-making setups under the effect of the utility function, and it captures the threshold front of the probability distribution, which determines the decision-making policy.  Although this framework is mathematically closely related to the probability premium, it is accompanied by a more rigorous graphical interpretation inspired by \cite{utility_premium_fundamental} than the graphical interpretation of the probability premium in \cite{probability_premium_graphical}, and it recovers classical loss-aversion definitions based on the utility premium, thereby yielding a more decision-oriented interpretation. In addition, the probabilistic risk sensitivity approach can be used to provide further insights into the risk behavior of an EUT utility function for binary gambles with unequal probability distributions.

Our framework is based on a geometric representation of the expected utility of a gamble, from which we derive a probability threshold for families of gambles with fixed outcome values. This probability threshold may refer to a choice problem between a gamble and a certain outcome, or between two gambles. Since this approach is introduced here in a binary-gamble setting, we restrict attention to binary gambles, and an extension to $N$-outcome gambles is left for future work. The main metric of risk sensitivity is the threshold ratio of the probability of gain to the probability of loss, $\mathcal{RS} \equiv \frac{P_{gain}}{P_{loss}}$, and the main definitions are the following:
\begin{itemize}
    \item The risk sensitivity metric for a symmetric binary gamble $\{x-\delta,P_{loss};x+\delta,P_{gain}\}$ around a stationary state (certain outcome) $x$ is
    \begin{equation}
        \mathcal{RS}_{stat}(x,\delta) = \frac{u(x) - u(x-\delta)}{u(x+\delta) - u(x)}
    \end{equation}
    The \textit{symmetric bet aversion} is defined as the case in which $\mathcal{RS}_{stat}(x,\delta)$ under the effect of the utility function is greater than in the risk neutral case (i.e., the case without the effect of the utility function), namely $\mathcal{RS}_{stat}(x,\delta) > 1$. The \textit{probabilistic increasing symmetric bet aversion} is defined as the case in which $\mathcal{RS}_{stat}(x,\delta)$ is an increasing function of $\delta$, i.e., $\frac{\partial \mathcal{RS}_{stat}(x,\delta)}{\partial \delta} > 0$.  
    \item The risk sensitivity metric for a non-symmetric binary gamble $\{x-w\delta_1,P_{loss};x+w\delta_2,P_{gain}\}$ (with $\delta_1$ and $\delta_2$ defining a fixed scaling direction and $w$ denoting the scaling factor) around a stationary state (certain outcome) $x$ is
    \begin{equation}
        \mathcal{RS}_{stat}(x,\delta_1,\delta_2,w) = \frac{u(x) - u(x-w\cdot\delta_1)}{u(x+w\cdot\delta_2) - u(x)} 
    \end{equation}
    The \textit{non-symmetric bet aversion} or \emph{Neilson's weak aversion} is defined as the case in which $\mathcal{RS}_{stat}(x,\delta_1,\delta_2,w)$ under the effect of the utility function is greater than in the risk neutral case (i.e., the case without the effect of the utility function), namely $\mathcal{RS}_{stat}(x,\delta_1,\delta_2,w) > \frac{\delta_1}{\delta_2}$. 
    \item The risk sensitivity metric for two nested binary gambles with the same probability distributions $\{x-w \cdot \delta_1,P_{loss};x+ w \cdot \delta_2,P_{gain}\}$, $\{x- w \cdot \delta_3,P_{loss};x+ w \cdot \delta_4,P_{gain}\}$ and $0 \leq \delta_1 \leq \delta_3$, $0 \leq \delta_2 \leq \delta_4$ (with $\delta_1,\delta_2,\delta_3,\delta_4$ defining a fixed scaling direction and $w$ denoting the scaling factor) is
    \begin{equation}
        \mathcal{RS}_{gamble}^{equal}(x,\delta_1,\delta_2,\delta_3,\delta_4) = \frac{u(x-w \cdot \delta_1) - u(x- w \cdot \delta_3)}{u(x+ w \cdot \delta_4) - u(x + w \cdot \delta_2)} 
    \end{equation}
    \item The risk sensitivity metric for two binary gambles with unequal probability distribution $A = \{x-\delta_1,P_{A,loss};x+\delta_2,P_{A,gain}\}$ and $B = \{x-\delta_3,P_{B,loss};x+\delta_4,P_{B,gain}\}$ with $A$ as the reference gamble, is
    \begin{equation}
        \mathcal{RS}_{gamble}^{unequal}(x,\delta_1,\delta_2,\delta_3,\delta_4, [P_A]) = \frac{\mathbb{E}[u(A)] - u(x-\delta_3)}{u(x+\delta_4) - \mathbb{E}[u(A)]}
    \end{equation}
    where $[P_A] = [P_{A,loss}, P_{A,gain}]$ and $\mathbb{E}[u(A)] = u(x-\delta_1) \cdot P_{A,loss} + u(x+\delta_2) \cdot P_{A,gain}$.
\end{itemize}
Furthermore, we note that if the utility function has a structure such that the probabilistic risk sensitivity $\mathcal{RS}$ is independent of the reference (certain) state $x$, then the utility function is called Constant Probabilistic Risk Sensitive (CPRS).

\subsection{Probability Weighting and Decision Weights}
A key feature of CPT is the non-linear distortion of probabilities, where objective probabilities are transformed through a PWF to reflect how individuals subjectively perceive uncertainty. The PWF captures the empirically observed tendency for individuals to overweight small probabilities and underweight moderate to high probabilities, deviating from the linear treatment in EUT. 

Formally, the probability weighting functions are denoted by $\omega^{\pm}: [0, 1] \to [0, 1]$, where $\omega^{+}$ is applied to gains and $\omega^{-}$ to losses. These functions are continuous, strictly increasing, and satisfy the boundary conditions: $\omega^{\pm}(0) = 0$ and $\omega^{\pm}(1) = 1$. 
Note that the functions $\omega^{+}$ and $\omega^{-}$ assign the same decision weight to the reference point outcome if and only if the following symmetry condition holds: $\omega^-(p) + \omega^+(1-p) = 1 \; \forall p \in [0, 1]$, which is equivalent to using a single, unified PWF for the entire cumulative distribution \cite{Ingersoll}. 
Moreover, a fundamental distinction between PT and CPT lies in the domain over which the PWFs are applied. In PT, the weighting is applied directly to the probability mass function (in discrete settings) or the probability density function (in continuous settings). This approach can violate first-order stochastic dominance. By contrast, CPT applies the weighting to cumulative probabilities, i.e., the cumulative distribution function (CDF) in the continuous case, preserving consistency with stochastic dominance and enabling a rank-dependent evaluation of outcomes.

One of the earliest and most influential forms of the PWF was proposed in \cite{Kahneman_Tversky_1992}, based on empirical and experimental evidence, and is defined as:
\begin{equation}
    \omega(p) = \frac{p^\delta}{\left( p^\delta + (1-p)^\delta \right)^{1/\delta}} \;,\; 0 < \delta \leq 1.
\end{equation}

The log-odds distortion function $\omega_{p_0,\gamma}$ proposed in \cite{LogOdd95} is defined by:
\begin{equation}
    \text{Lo}(w_{p_0,\gamma}(p)) = \text{Lo}(p) + (1 - \gamma)\, \text{Lo}(p_0), \quad \forall\, p \in (0, 1),
\end{equation}
where $\text{Lo}(p) := \ln\left( \frac{p}{1 - p} \right)$ denotes the log-odds transformation.

One of the most widely used forms of the probability weighting function is the Prelec function \cite{Prelec}, defined as:
\begin{equation}
    \omega(p) = \exp \left( -\gamma \left( -\ln(p) \right)^\theta \right)
    \label{eq:Prelec_PWF}
\end{equation}
where $p \in (0,1)$, $\theta > 0$, and $\gamma > 0$. 
The parameter $\theta$ controls the shape or curvature of the function, shaping the degree of probability distortion, while $\gamma$ determines the overall elevation of the curve, i.e., the location of the inflection point relative to the identity line $\omega(p) = p$, effectively shifting the function upward or downward.  For $0 < \theta < 1$ the function exhibits the familiar inverse-S pattern, with overweighting of small probabilities and underweighting of large probabilities. When $\theta > 1$, the weighting function becomes S-shaped instead. Table~\ref{tab:parameters_probability_weighting_function} summarizes the main shape properties for the parameter combinations considered in this paper.
\begin{table}[h]
    \centering
    \begin{tabular}{|c|c|c|c|}
        \hline
         & $0 < \gamma < 1$ & $\gamma = 1$ & $\gamma > 1$\\
        \hline
        $0 < \theta < 1$ & \makecell{inverse S-shape, \\ $\tilde{p} < \omega(\tilde{p})$}  & \makecell{inverse S-shape, \\ $\tilde{p} = \omega(\tilde{p})$} & \makecell{inverse S-shape, \\ $\tilde{p} > \omega(\tilde{p})$} \\
        \hline
        $\theta = 1$ & \makecell{strictly concave, \\ $\tilde{p} < \omega(\tilde{p})$} & \makecell{linear, \\ $\tilde{p} = \omega(\tilde{p})$} & \makecell{strictly convex, \\ $\tilde{p} > \omega(\tilde{p})$} \\
        \hline
        $\theta > 1$ & \makecell{S-shape, \\ $\tilde{p} < \omega(\tilde{p})$} & \makecell{S-shape, \\ $\tilde{p} = \omega(\tilde{p})$} & \makecell{S-shape, \\ $\tilde{p} > \omega(\tilde{p})$} \\
        \hline
    \end{tabular}
    \caption{Shape properties of the Prelec PWF for different parameter regimes. The inflection point is denoted by $\tilde{p}$.}
    \label{tab:parameters_probability_weighting_function}
\end{table}

Taken together, the parameter combinations in Table~\ref{tab:parameters_probability_weighting_function} produce both inverse-S and S-shaped weighting patterns. In particular, the parameter $\theta$ determines whether the function is inverse-S ($0 < \theta < 1$) or S-shaped ($\theta > 1$), while $\gamma$ shifts the curve relative to the identity line $\omega(p)=p$. Accordingly, the pattern of overweighting and underweighting depends on the parameter regime: the inverse-S case ($0 < \theta < 1$) captures the standard CPT pattern of overweighting small probabilities and underweighting large probabilities, whereas other regimes may generate different distortions.

\subsection{Common Parametric CPT-utility functions}
Two widely used utility functions within the CPT framework have been proposed in the literature. The first is the Kahneman and Tversky utility function \cite{Kahneman_Tversky_1992}, defined as follows:
\begin{equation}
    u(x) = \left\{
        \begin{array}{ll}
            \left(x - x_0 \right)^\alpha & \textrm{for} ~x \geq x_0 \\
            - \lambda\left(x_0 - x \right)^\beta & \textrm{for} ~x < x_0 \\
        \end{array} 
    \right. 
    \label{eq:KT_utility}
\end{equation}
where $\alpha, \beta \in  (0, 1]$ (for an S-shaped utility function) govern the curvature of the utility function and capture diminishing sensitivity to gains and losses, respectively, while $\lambda >0$ represents the degree of loss aversion. Despite its intuitive appeal and empirical relevance, this formulation has two notable limitations: (i) it is generally non-smooth at $x_0$: for $0<\alpha,\beta<1$, $\displaystyle \lim_{x \to x_0^-} \frac{\partial u}{\partial x}=+\infty$ and $\displaystyle \lim_{x \to x_0^+} \frac{\partial u}{\partial x}=+\infty$, whereas for boundary cases $\alpha=1$ or $\beta=1$ the corresponding one-sided derivative is finite; and (ii) symmetric bet aversion (or loss aversion under symmetric gambles) is satisfied only if $\alpha = \beta$ and $\lambda > 1$ \cite{Al-Nowaihi}.

Another widely used utility function, proposed by K\"{o}bberling and Wakker \cite{Kobberling_Wakker}, is given by:
\begin{equation}
    u(x) = \left\{
        \begin{array}{ll}
            \dfrac{1 - \exp\left( - \alpha (x - x_0)  \right)}{\alpha} & \text{for}~ x_0 \leq x \\
            - \lambda \dfrac{1 - \exp\left( - \beta (x_0 - x)  \right)}{\beta} & \text{for}~ x < x_0 \\
        \end{array} 
    \right. 
    \label{eq:KW_utility}
\end{equation}
where $\lambda$, $\alpha$, and $\beta$ are agent-specific parameters. The positivity of all parameters ensures that the function is strictly increasing and exhibits the S-shaped form characteristic of CPT.
This formulation allows for greater flexibility than the classical power function, particularly in controlling curvature and asymmetry. Moreover, the condition for increasing symmetric bet aversion can be satisfied if $\alpha > \beta$ and $\lambda > 1$ \cite{Dhami_book}.

\subsection{CPT Evaluation Functional (Discrete and Continuous Prospects)}
The subjective value attributed to a prospect $R$, when $R$ is a discrete random variable, is given by
\begin{equation}
    V(R) = \sum_{i=-n}^{m} \Omega_i \cdot u(r_i),
\end{equation}
where the cumulative probability weighting under the ordering $r_{-n} \leq r_{-n+1} \leq \dots r_{0} \leq \dots \leq r_{m-1} \leq r_{m}$ is defined by 
\begin{subequations}
    \begin{equation}
        P_{-i} = p(R = r_{-n}) + \dots + p(R = r_{-i}) \rightarrow \Omega_{-i} = \omega^-(P_{-i}) - \omega^-(P_{-i-1})
    \end{equation}
    \begin{equation}
        P_{i}^c = p(R = r_{i}) + \dots + p(R = r_{m}) \rightarrow \Omega_{i} = \omega^+(P_{i}^c) - \omega^+(P_{i+1}^c)
    \end{equation}
\end{subequations}
Under the symmetry condition $\omega^-(p) + \omega^+(1 - p) = 1$, one may equivalently work with a single unified weighting function \cite{Ingersoll}.

Regarding the case of a continuous prospect $R$, the subjective value attributed to $R$ is given by
\begin{equation}
    V(R) = \int_{-\infty}^{x_0} u(r)\, d[\omega^-(F_R(r))] - \int_{x_0}^{+\infty} u(r)\, d[\omega^+(\bar{F}_R(r))],
\end{equation}
where $\omega^-,\omega^+$ are the probability weighting functions for the loss and gain domains, $F_R(r) = \mathbb{P}(R \leq r)$ is the CDF of the prospect $R$, and $\bar{F}_R(r) = \mathbb{P}(R > r) = 1 - F_R(r)$ is the survival (tail) function. Under the symmetry condition discussed above (see \cite{Ingersoll}), one may use a single unified probability weighting function $\omega$, in which case the CPT valuation can be written as
\begin{equation}
    V(R) = \int_{-\infty}^{+\infty} u(r)\, f_R(r) \, \frac{d\omega(F_R(r))}{dF_R(r)} \, dr, 
\end{equation}
where $\omega$ is the common probability weighting function for both subdomains and $f_R(r)$ is the PDF of the prospect $R$.

\section{Neilson's Weak and Strong Loss Aversion: Definitions and Characterizations}\label{Section3}
While loss aversion has been extensively examined through the work of Kahneman and Tversky, particularly within the framework of PT, Neilson’s alternative formulation offers a theoretically rich and comparatively underexplored perspective. Despite its potential to provide deeper insights into behavioral asymmetries in risk evaluation, Neilson’s approach has received relatively limited attention in both theoretical development and empirical application.

This section provides a detailed analysis and interpretation of Neilson’s definitions of aversion. Our goal is to clarify their conceptual foundations, evaluate how they align with or diverge from more widely studied models, and strengthen their theoretical underpinnings. We focus in particular on characterization results, the relationship between weak and strong loss aversion, and the role of nonsmooth, S-shaped utility functions in establishing equivalence.

\subsection{Neilson's Weak Loss Aversion and Non-symmetric Bet Aversion}
In our terminology, Neilson's concept of \emph{weak loss aversion} is more naturally interpreted as \emph{non-symmetric bet aversion}, since it entails the rejection of all \textit{fair but non-symmetric} binary gambles around a reference point. This characterization highlights the structural asymmetry in outcome sensitivity rather than a broad notion of aversion to losses per se.

It is important to note that \emph{Neilson’s weak loss aversion} does \textit{not} imply \emph{increasing symmetric bet aversion} in general. That is, even if an individual rejects all non-symmetric fair gambles around a reference point, this does not necessarily mean that the degree of aversion increases with the scale of symmetric gambles. An intuitive counterexample illustrating this separation is provided in Figure~\ref{fig:Neilson_weak_but_not_strong}.

\begin{figure}[H]
    \centering
    \includegraphics[width=0.5\linewidth]{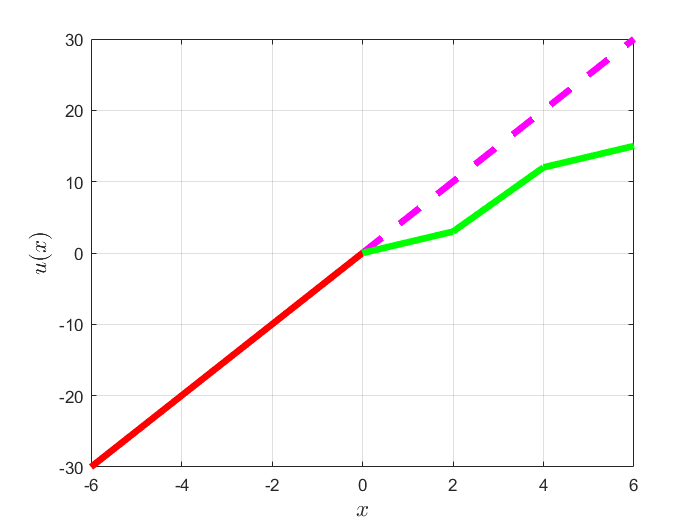}
    \caption{Schematic counterexample: a utility function satisfying Neilson's weak loss aversion while violating the marginal-slope ordering required for Neilson's strong loss aversion. The violation occurs for pairs $y<x_0<z$ such that $u'(z)>u'(y)$.}
    \label{fig:Neilson_weak_but_not_strong}
\end{figure}

In contrast, \emph{Neilson's strong loss aversion} does imply \emph{increasing symmetric bet aversion}, a result that follows directly from the formal definitions. This observation reveals a conceptual and behavioral \textit{gap} between the \emph{weak} and \emph{strong} forms of \emph{Neilson's loss aversion}.

To bridge this gap, we introduce the notion of \textbf{increasing non-symmetric bet aversion}, formulated within a utility-premium framework that quantifies the intensity of gamble rejection. This intermediate concept captures individuals who not only reject all non-symmetric fair gambles but also exhibit increasing aversion as the stakes of such gambles scale. Importantly, this condition:
\begin{itemize}
    \item Implies \emph{increasing symmetric bet aversion},
    \item Is strictly stronger than \emph{Neilson’s weak loss aversion}, and
    \item Constitutes a special case within \emph{Neilson’s strong loss aversion}.
\end{itemize}

\begin{theorem}\label{increasing_non_symmetric}
Let $u: \mathbb{R} \to \mathbb{R}$ be differentiable on \((-\infty,x_0)\cup(x_0,\infty)\), with \(u(x_0)\) well defined. Fix \(\delta_1,\delta_2>0\), and consider the family of fair non-symmetric binary gambles around \(x_0\), scaled by \(w>0\), with outcomes
\[
x_0-w\delta_1
\qquad\text{and}\qquad
x_0+w\delta_2 .
\]
Then the utility-premium criterion for these scaled gambles is strictly increasing in \(w\) if and only if
\[
u'(x_0+w\delta_2) < u'(x_0-w\delta_1), \qquad \forall\, w>0.
\]
Equivalently, for \(p=\dfrac{\delta_1}{\delta_1+\delta_2}\), the expected utility of the scaled gamble is strictly decreasing in \(w\), i.e.,
\[
p\,u(x_0+w_1\delta_2)+(1-p)\,u(x_0-w_1\delta_1)
>
p\,u(x_0+w_2\delta_2)+(1-p)\,u(x_0-w_2\delta_1),
\]
for all scaling factors \(w_2>w_1\geq0\).
\end{theorem}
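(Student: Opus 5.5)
The plan is to reduce everything to a one-variable calculus statement about the map $w\mapsto \mathbb{E}(w)$. First I check that the weights are the fair ones. With $p=\delta_1/(\delta_1+\delta_2)$ on the gain $x_0+w\delta_2$ and $1-p=\delta_2/(\delta_1+\delta_2)$ on the loss $x_0-w\delta_1$, the mean is
\[
p(x_0+w\delta_2)+(1-p)(x_0-w\delta_1)=x_0+\frac{w(\delta_1\delta_2-\delta_2\delta_1)}{\delta_1+\delta_2}=x_0 .
\]
So every scaled gamble is fair around $x_0$, and
\[
\mathbb{E}(w)=p\,u(x_0+w\delta_2)+(1-p)\,u(x_0-w\delta_1)
\]
is exactly the function $\mathbb{E}(x_0,w\delta_1,w\delta_2)$ from the list of utility-premium properties in Section~\ref{Section2}. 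The utility premium is $\mathcal{U}(w)=u(x_0)-\mathbb{E}(w)$. Hence "$\mathcal{U}$ strictly increasing in $w$" and "$\mathbb{E}$ strictly decreasing in $w$" are the same statement, which gives the stated equivalence of the two formulations at once.

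Next I differentiate. For $w>0$ both arguments $x_0\pm w\delta_{2,1}$ lie in the region where $u$ is differentiable, so by the chain rule
\[
\mathbb{E}'(w)=p\,\delta_2\,u'(x_0+w\delta_2)-(1-p)\,\delta_1\,u'(x_0-w\delta_1)
=\frac{\delta_1\delta_2}{\delta_1+\delta_2}\bigl[u'(x_0+w\delta_2)-u'(x_0-w\delta_1)\bigr].
\]
The positive prefactor means the sign of $\mathbb{E}'(w)$ is exactly the sign of the slope comparison in the theorem.

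For sufficiency, I assume $u'(x_0+w\delta_2)<u'(x_0-w\delta_1)$ for all $w>0$. Then $\mathbb{E}'<0$ on $(0,\infty)$, and the mean value theorem gives $\mathbb{E}(w_1)>\mathbb{E}(w_2)$ for all $0<w_1<w_2$. For necessity, I note that a differentiable strictly decreasing function has $\mathbb{E}'\le 0$ everywhere. It cannot have $\mathbb{E}'=0$ on any subinterval, since it would then be constant there.

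The step I expect to be the main obstacle is twofold.

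First, the "only if" direction with a \emph{strict} inequality at every $w$. Strict monotonicity only forces $u'(x_0+w\delta_2)\le u'(x_0-w\delta_1)$, with equality possible at isolated points; the standard example is $t\mapsto -t^3$ at $0$. I will therefore read the characterization as: the slope ordering holds weakly for all $w>0$ and strictly on a dense set, equivalently it fails to be an equality on any interval. I would flag in the proof that the pointwise strict version is the sufficient condition, and that it is also necessary only up to this isolated-zero caveat.

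Second, the endpoint $w_1=0$. Here the gamble degenerates to the sure outcome, so $\mathbb{E}(0)=u(x_0)$, but $u$ may jump at $x_0$. Monotonicity on $(0,\infty)$ only gives $\mathbb{E}(w_2)<\lim_{w\to0^+}\mathbb{E}(w)=p\,u(x_0^+)+(1-p)\,u(x_0^-)$. To extend the conclusion to $w_1=0$, I need this one-sided limit to be at most $u(x_0)$. That holds under the normalization $u(x_0^\pm)=u(x_0)=0$ adopted in Section~\ref{Section2}. More generally it holds whenever $u(x_0)$ is at least the $p$-weighted average of the one-sided limits. I would state this as the (mild) regularity assumption under which the case $w_1=0$ is covered, and remark that it then also yields Neilson's weak loss aversion $\mathbb{E}(w)\le u(x_0)$ as a corollary.
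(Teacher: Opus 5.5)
Your argument is essentially the paper's proof: differentiate the expected utility (equivalently the utility premium $\mathcal{U}$) in $w$, pull out the positive factor $\frac{\delta_1\delta_2}{\delta_1+\delta_2}$, and read off the sign of $u'(x_0-w\delta_1)-u'(x_0+w\delta_2)$. Both of your caveats are genuine and go beyond the paper. First, its proof silently identifies ``strictly increasing'' with ``$\partial_w\mathcal{U}>0$'', so the strict ``only if'' is never established, and it indeed fails when the two marginal utilities touch at an isolated $w$. Second, it never treats $w_1=0$, where a jump such as $u(x)=x-x_0+1$ for $x>x_0$, $u(x_0)=0$, $u(x)=2(x-x_0)$ for $x<x_0$ satisfies the slope condition yet gives $\mathbb{E}(w)>u(x_0)$ for small $w>0$. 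Your corrected formulation is therefore the right one: weak ordering that is not an equality on any interval, plus $u(x_0)\ge\lim_{w\to0^+}\mathbb{E}(w)$, which reduces to your one-sided-limit condition when $u$ is monotone.
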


\begin{proof}
See Appendix A.
\end{proof}

This result provides a clean differential characterization of increasing non-symmetric bet aversion in terms of the marginal utility gradient around the reference point. It shows that aversion increases with the scale of the gamble when marginal sensitivity to losses consistently exceeds marginal sensitivity to gains along the corresponding scaled outcomes.

\subsection{Neilson's Weak Loss Aversion: Tangent-Line (Geometric) Characterization}
To gain deeper insight into \emph{Neilson’s} notion of \emph{weak loss aversion}, we provide an equivalent formulation that lends itself to a more intuitive geometric interpretation.

\begin{theorem} \label{weak_loss_aversion_equivalent}
Let $u: \mathbb{R} \to \mathbb{R}$ be a strictly increasing, twice continuously differentiable with an exception at the reference point, where it is nonsmooth, and S-shaped utility function such that $u(x_0) = 0$. Then, the function $u$ satisfies Neilson’s \emph{weak aversion} if and only if:
\[
u(x) \leq u'(x_0^+) \cdot (x - x_0), \quad \forall x < x_0.
\]
\end{theorem}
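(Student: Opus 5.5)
The plan is to reduce Neilson's weak aversion to a comparison between a single number on the gain side and the chord slopes on the loss side. The key observation is that the left-hand side of the weak-aversion inequality depends only on $z$ and the right-hand side only on $y$. Since the inequality must hold for all $y<x_0<z$, it is equivalent to
\[
\sup_{z>x_0}\frac{u(z)}{z-x_0}\;\leq\;\inf_{y<x_0}\frac{u(y)}{y-x_0}.
\]
So it suffices to identify the supremum on the left and then rewrite the resulting inequality pointwise in $y$.

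\textbf{Step 1: the gain-side supremum.} I would use concavity of $u$ on $[x_0,\infty)$ together with $u(x_0)=0$. Here I read ``nonsmooth at $x_0$'' as a kink and not a jump, so that $u$ is continuous at $x_0$ from the right; this is the standard CPT normalization recalled in Section~\ref{Section2}, and I would state it explicitly. Under this reading the chord slope $z\mapsto \frac{u(z)-u(x_0)}{z-x_0}$ is nonincreasing in $z$ by the three-chord property of concave functions. Its supremum is therefore the limit as $z\to x_0^+$, namely the right derivative $u'(x_0^+)$. This quantity exists in $(0,+\infty]$ because $u$ is strictly increasing and concave on the gain side.

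\textbf{Step 2: the reformulation.} Suppose first that $u'(x_0^+)<\infty$. Weak aversion then holds if and only if $u'(x_0^+)\leq \frac{u(y)}{y-x_0}$ for every $y<x_0$. Multiplying by $y-x_0<0$ reverses the inequality and gives exactly $u(y)\leq u'(x_0^+)(y-x_0)$, which is the condition in the statement. Each step is an equivalence, so both directions follow at once.

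\textbf{Step 3: the degenerate case.} If $u'(x_0^+)=+\infty$, as for the Kahneman--Tversky power form with $\alpha<1$, the supremum in Step 1 is infinite, so weak aversion fails. The tangent-line condition also fails, or is vacuous-false, because its right-hand side is $-\infty$ for every $y<x_0$. The equivalence therefore holds in the extended sense, and I would include this as a remark.

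Convexity on the loss side is not needed for the equivalence itself. It only enters the geometric reading: the graph of $u$ on the loss side must lie below the extension into $x<x_0$ of the gain-side tangent line at $x_0^+$.

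The main obstacle is Step 1, specifically justifying that the supremum is attained as the limit at $x_0^+$ and equals the one-sided derivative. This requires concavity on the closed half-line including $x_0$, which in turn requires right-continuity of $u$ at $x_0$. If $u$ had an upward jump at $x_0^+$, the chord slopes would blow up near $x_0$ and the statement would change. I would therefore first make the continuity convention explicit and then invoke the monotone-chord lemma for concave functions.
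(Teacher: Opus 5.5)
Your proof is correct and follows essentially the same route as the paper. Both directions hinge on one fact: by concavity on the gain side, the chord slope $u(z)/(z-x_0)$ is bounded above by $u'(x_0^+)$ and tends to it as $z\to x_0^+$. The paper gets this from the Mean Value Theorem plus monotonicity of $u'$, where you use the monotone-chord lemma, and it treats the two directions separately rather than through your $\sup\le\inf$ decoupling. Your explicit right-continuity convention at $x_0$ and your remark on $u'(x_0^+)=+\infty$ make precise assumptions that the paper's limit and Mean Value Theorem steps use only implicitly.
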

\begin{proof}
    See Appendix A.
\end{proof}

Thus, \emph{Neilson's weak loss aversion} is equivalent to requiring that the loss branch of the utility function lie strictly below (or on) the right tangent at the reference point.

This result reveals that \emph{Neilson's weak loss aversion} is equivalent to the condition that, for all losses $x < x_0$, the utility function lies strictly below the tangent line to the gain domain at $x_0$, extended into the loss domain. This provides a clear graphical criterion: the slope of the utility function just to the right of the reference point defines an upper linear bound on the actual value assigned to losses.

As a consequence, we can reformulate \emph{Neilson’s weak loss aversion} in the following way:
\begin{equation} 
 \label{eqn:Reformulate_Neilson}
    u'(x_0^-) \cdot (x - x_0) \leq u(x) \leq u'(x_0^+) \cdot (x - x_0), \quad \forall x < x_0
\end{equation}
where
\begin{itemize}
    \item The left inequality follows under the assumption that the utility function is convex in the loss domain (a common CPT specification), together with Jensen’s inequality.
    \item The right inequality is the defining condition of Neilson’s weak loss aversion.
\end{itemize}
This formulation allows us to make an important observation about the class of utility functions that satisfy \textit{Neilson’s weak loss aversion}. In particular, for the inequality $u(x) \leq u'(x_0^+) \cdot (x - x_0)$ to hold as $x \to -\infty$, the following must be true:
\[
0 \leq \lim_{x \to -\infty} \left( u'(x_0^+) \cdot (x - x_0) - u(x) \right)
\quad \text{and} \quad
\lim_{x \to -\infty} \left( u'(x_0^+) \cdot (x - x_0) \right) = -\infty.
\]
These conditions imply that the utility function must not exhibit a saturation effect in the limit as $x \to -\infty$. That is:
\[
\lim_{x \to -\infty} u(x) \neq c \in \mathbb{R}^- \quad \text{and} \quad \lim_{x \to -\infty} u'(x) \neq 0.
\]
In contrast, this behavior is in tension with the principle of diminishing marginal sensitivity in the loss domain, as proposed by Kahneman and Tversky in their original formulation of PT, where the utility function tends to flatten for extreme losses.

Hence, \textit{Neilson’s weak loss aversion} fundamentally differs from Kahneman and Tversky’s framework in its treatment of deep losses: it requires the utility function to continue decreasing at a sustained, non-negligible rate as losses grow, whereas Kahneman and Tversky allow the utility function to asymptotically flatten, possibly approaching a finite lower bound.

\subsection{Weak–Strong Relations under S-Shaped Preferences}
As previously noted, Neilson claims an equivalence between his definitions of \emph{Neilson's weak loss aversion} and \emph{Neilson's strong loss aversion}, under the assumption that the utility function is strictly increasing, twice continuously differentiable, and S-shaped. While the implication from \textit{Neilson's strong} to \textit{weak loss aversion} is correctly established in his proof, the converse direction, from weak to strong, is incomplete and lacks rigor.

In this subsection, we address this gap by providing a complete and corrected proof of the reverse implication. Specifically, we show that under S-shaped preferences, \emph{Neilson's weak loss aversion} implies \emph{Neilson's strong loss aversion} provided a mild no-saturation condition holds in the loss tail.

\begin{theorem}\label{Neilson_S_shape_equivalence}
Let $u:\mathbb{R} \to \mathbb{R}$ be a strictly increasing, twice continuously differentiable with an exception at the reference point, where it is nonsmooth, and S-shaped utility function such that $u(x_0) = 0$. Assume that $u$ satisfies Neilson's weak loss aversion and, in addition, the loss-domain marginal value satisfies the \emph{no-saturation (tail-slope) condition} 
\begin{equation}\label{eq:NS_tail_slope}
\inf_{x<x_0} u'(x)\;\ge\;u'(x_0^+).
\end{equation} Then $u$ satisfies Neilson’s strong loss aversion.
\end{theorem}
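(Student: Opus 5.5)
Strong loss aversion requires $u'(z)\le u'(y)$ for every pair $y<x_0<z$. The plan is to bound the left side above and the right side below by the same constant, $u'(x_0^+)$, so the two sides never need to be compared pairwise.

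\textbf{Gain side.} By the S-shape assumption, $u''\le 0$ on $(x_0,\infty)$, so $u'$ is nonincreasing there. Letting $x\downarrow x_0$ gives
\[
u'(z)\le \lim_{x\to x_0^+}u'(x)=u'(x_0^+) \quad \text{for all } z>x_0 .
\]
Since $u$ is $C^2$ away from $x_0$ and concave on the gain side, this one-sided limit exists (possibly $+\infty$, in which case nothing further is needed). It coincides with the right derivative used in Theorem~\ref{weak_loss_aversion_equivalent}, because the right derivative of a concave function equals the limit of its derivative from the right.

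\textbf{Loss side.} I would invoke the tail-slope condition \eqref{eq:NS_tail_slope} directly:
\[
u'(y)\ge \inf_{x<x_0}u'(x)\ge u'(x_0^+) \quad \text{for every } y<x_0 .
\]
Chaining the two bounds gives $u'(z)\le u'(x_0^+)\le u'(y)$, which is Neilson's strong loss aversion.

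\textbf{Role of weak loss aversion.} The hypothesis is not used in the chain itself, but I would record how it fits. By Theorem~\ref{weak_loss_aversion_equivalent}, weak aversion gives $u(x)\le u'(x_0^+)(x-x_0)$ for $x<x_0$. Dividing by $x-x_0<0$ and letting $x\to x_0^-$ yields $u'(x_0^-)\ge u'(x_0^+)$. Because $u$ is convex on the loss side, $u'$ is nondecreasing there and $\inf_{x<x_0}u'(x)=\lim_{x\to-\infty}u'(x)$. So \eqref{eq:NS_tail_slope} really constrains only the far loss tail: it rules out exactly the flattening that the discussion after Theorem~\ref{weak_loss_aversion_equivalent} identified.

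\textbf{Main obstacle.} The delicate point is not the inequality chain but ensuring $u'(x_0^+)$ is finite and correctly identified at a nonsmooth reference point. First, I would check that the limit of $u'$ from the right equals the right derivative, using concavity as above. Second, I would treat the degenerate case $u'(x_0^+)=+\infty$ separately. In that case \eqref{eq:NS_tail_slope} would require $u'\equiv+\infty$ on the loss side, which is impossible for a differentiable $u$. So the hypotheses force $u'(x_0^+)<\infty$, and the argument goes through.

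I would close with a remark that weak aversion alone does not suffice. A loss branch that lies below the right tangent line while flattening as $x\to-\infty$, as in Figure~\ref{fig:Neilson_weak_but_not_strong}, violates strong aversion for $y$ sufficiently negative. This explains why the tail-slope condition is needed.
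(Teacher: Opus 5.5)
Your chain $u'(z)\le u'(x_0^+)\le \inf_{x<x_0}u'(x)\le u'(y)$ is correct and is exactly the paper's proof. The first inequality comes from gain-side concavity and the second from \eqref{eq:NS_tail_slope}. As you observe, weak aversion plays no role in it; likewise, the paper's intermediate step $\lim_{x\to-\infty}u(x)=-\infty$ is never used.

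Your closing remark, however, is false under the theorem's own hypotheses and should be dropped. If the loss branch is convex, $u'$ is nondecreasing on $(-\infty,x_0)$. Fix $x_1<x_0$ and write
\[
\frac{u(y)}{y-x_0}=\frac{u(x_1)-u(y)}{x_0-y}-\frac{u(x_1)}{x_0-y}, \qquad y<x_1 .
\]
By the Mean Value Theorem, $u(x_1)-u(y)\le u'(x_1)(x_1-y)$, so $\limsup_{y\to-\infty}u(y)/(y-x_0)\le u'(x_1)$. Now apply the form $u(y)/(y-x_0)\ge u'(x_0^+)$, which you already derived from weak aversion via Theorem~\ref{weak_loss_aversion_equivalent}. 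It gives $u'(x_0^+)\le u'(x_1)$ for every $x_1<x_0$, which is \eqref{eq:NS_tail_slope} itself.

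So a convex loss branch lying below the right tangent line cannot flatten to a tail slope below $u'(x_0^+)$. For S-shaped $u$, weak aversion alone therefore already implies strong aversion. The tail-slope condition is a consequence of the hypotheses, not an extra requirement. Any weak-but-not-strong example, such as the one in Figure~\ref{fig:Neilson_weak_but_not_strong}, must violate the S-shape hypothesis.

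The paper also presents \eqref{eq:NS_tail_slope} as needed, but the argument above shows it is automatic here. Your own observation that $\inf_{x<x_0}u'(x)=\lim_{x\to-\infty}u'(x)$ was one step away from this.
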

\begin{proof}
See Appendix A.
\end{proof}

\begin{remark}
Condition \eqref{eq:NS_tail_slope} rules out loss-domain saturation and makes explicit the asymptotic requirement needed to upgrade Neilson’s weak loss aversion to strong loss aversion, thereby ensuring the pointwise marginal-utility ordering. In many standard CPT specifications that flatten for extreme losses, condition \eqref{eq:NS_tail_slope} fails. On unbounded domains, such loss-tail saturation also violates Neilson-type weak and strong loss-aversion conditions.
\end{remark}
This result identifies the additional tail-slope restriction needed to guarantee the equivalence between Neilson’s weak and strong loss aversion under nonsmooth S-shaped preferences. Under this no-saturation condition, weak loss aversion can be upgraded to the marginal-utility ordering required for strong loss aversion.

\subsection{Neilson's Strong Loss Aversion: Average-Slope Characterization and Gamble Interpretation} \label{sec:strong_neilson_probabilistic}
To better understand the conceptual and operational meaning of \emph{Neilson’s strong loss aversion}, we now present an equivalent formulation in terms of average slopes (equivalently, average marginal utilities) across gains and losses. This alternative expression facilitates comparison, connects naturally with the subcases discussed earlier, and leads to a natural gamble-based interpretation.

\subsubsection{An Average-Slope Inequality for Strong Loss Aversion}
\begin{theorem} \label{Neilson_strong_equivalent}
Let $u:\mathbb{R} \to \mathbb{R}$ be strictly increasing and differentiable on
$(-\infty,x_0)$ and $(x_0,\infty)$, with $u(x_0)=0$\footnote{Allowing $\delta_3=0$ or $\delta_4=0$ makes $x_0$ an endpoint in the Mean Value Theorem step; we therefore assume $u$ is continuous at $x_0$.}. Then the following are equivalent:
\begin{enumerate}
\item[(i)] (\emph{Non-strict strong loss aversion}) For all $y<x_0<z$,
\[
u'(z)\le u'(y).
\]
\item[(ii)] (\emph{Average-slope inequality}) For all $\delta_1,\delta_2,\delta_3,\delta_4\ge 0$
such that $\delta_1\neq \delta_3$, $\delta_2\neq \delta_4$, and $\delta_1+\delta_3>0$, $\delta_2+\delta_4>0$,
\[
\frac{u(x_0+\delta_2)-u(x_0+\delta_4)}{\delta_2-\delta_4}
\le
\frac{u(x_0-\delta_3)-u(x_0-\delta_1)}{\delta_1-\delta_3}.
\]
\end{enumerate}
\end{theorem}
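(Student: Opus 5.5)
The proof goes in two directions: the Mean Value Theorem for (i)$\Rightarrow$(ii), and limits of difference quotients for (ii)$\Rightarrow$(i). As the footnote to Theorem~\ref{Neilson_strong_equivalent} asks, $u$ is taken to be continuous at $x_0$ throughout.

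\textbf{(i)$\Rightarrow$(ii).} Fix admissible $\delta_1,\dots,\delta_4$. The left-hand side of (ii) is the slope of the chord of $u$ between the two gain points $x_0+\delta_2$ and $x_0+\delta_4$. The interval $[x_0+\min(\delta_2,\delta_4),\,x_0+\max(\delta_2,\delta_4)]$ lies in $[x_0,\infty)$ and is nondegenerate because $\delta_2\neq\delta_4$. Since $u$ is continuous on this closed interval and differentiable on its interior, the Mean Value Theorem gives some $z$ in the interior, hence $z>x_0$, with
\[
\frac{u(x_0+\delta_2)-u(x_0+\delta_4)}{\delta_2-\delta_4}=u'(z).
\]
Symmetrically, the right-hand side of (ii) is the slope of the chord between $x_0-\delta_1$ and $x_0-\delta_3$. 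The sign convention works out, since
\[
\frac{u(x_0-\delta_3)-u(x_0-\delta_1)}{\delta_1-\delta_3}=\frac{u(x_0-\delta_3)-u(x_0-\delta_1)}{(x_0-\delta_3)-(x_0-\delta_1)}.
\]
By the Mean Value Theorem this equals $u'(y)$ for some $y$ in the open interval, so $y<x_0$. Applying (i) to $y<x_0<z$ gives $u'(z)\le u'(y)$, which is (ii). The only care needed is when $\delta_3=0$ or $\delta_4=0$, so that $x_0$ is an endpoint. There only continuity at $x_0$ is required, not differentiability, and the mean-value point is still strictly interior.

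\textbf{(ii)$\Rightarrow$(i).} Fix $y<x_0<z$, and set $\delta_1=x_0-y+h$, $\delta_3=x_0-y$, $\delta_4=z-x_0$ and $\delta_2=z-x_0+k$, with $h,k\neq 0$ small enough that all four quantities stay nonnegative (possible since $y<x_0<z$). The constraints $\delta_1\neq\delta_3$ and $\delta_2\neq\delta_4$ then hold, and the positivity constraints on $\delta_1+\delta_3$ and $\delta_2+\delta_4$ are automatic. Inequality (ii) becomes
\[
\frac{u(z+k)-u(z)}{k}\le\frac{u(y)-u(y-h)}{h}.
\]
Letting $k\to0$ and $h\to0$ independently, the two difference quotients converge to $u'(z)$ and $u'(y)$ respectively, because $u$ is differentiable at the interior points $y$ and $z$. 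Non-strict inequalities are preserved in the limit, so $u'(z)\le u'(y)$.

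\textbf{Main obstacle.} The hard part is only bookkeeping: checking that every parameter choice satisfies the admissibility constraints of (ii), getting the orientation and sign of the loss-side quotient right, and confirming that the mean-value points are genuinely on opposite sides of $x_0$ even at the boundary cases. Beyond that, the argument is a routine application of the Mean Value Theorem and a limit.
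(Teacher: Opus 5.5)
Your proof is correct and follows the paper's route: the Mean Value Theorem on each side of $x_0$ for (i)$\Rightarrow$(ii), and limits of difference quotients for (ii)$\Rightarrow$(i). The differences are minor. Your MVT step explicitly covers every ordering of $\delta_2,\delta_4$ and $\delta_1,\delta_3$, whereas the paper writes it out only for $\delta_1>\delta_3$, $\delta_4>\delta_2$. In the converse you fix $\delta_3=x_0-y$ and perturb $\delta_1$, while the paper fixes $\delta_1=x_0-y$ and lets $\delta_3\to\delta_1$.
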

\begin{proof}
    See Appendix A.
\end{proof}
This inequality states that the average slope of the utility function over gain intervals is no larger than the corresponding average slope over loss intervals. In this form, Neilson’s condition becomes directly comparable across intervals and is especially convenient for deriving gamble-based interpretations. More specifically, the inequality compares average marginal utilities over distinct intervals in the gain and loss domains, and it captures the core implication of strong loss aversion: losses carry greater marginal impact than comparable gains, even when the associated intervals differ in magnitude.

This general condition subsumes several previously defined forms of aversion. By selecting specific parameter configurations or taking appropriate limits, we recover notable subcases of Neilson’s framework:
\begin{itemize}
    \item \textbf{Symmetric Bet Aversion:}  
    Setting $\delta_4 = \delta_1 = 0$ and $\delta_2 = \delta_3 = \delta$ yields the standard case of symmetric bet aversion, where a fair symmetric gamble is weakly rejected.

    \item \textbf{Increasing Symmetric Bet Aversion:}  
    Taking the limits $\delta_2 \to \delta_4$ from below and $\delta_3 \to \delta_1$ from above, and then setting $\delta_1 = \delta_4 = \delta$, leads to increasing symmetric bet aversion, i.e., increasing rejection of symmetric bets as the stake size increases.

    \item \textbf{Neilson's Weak Loss Aversion (Non-Symmetric Fair Bet Rejection):}  
    Letting $\delta_1 = \delta_4 = 0$ also recovers the condition under which all fair non-symmetric binary gambles are rejected (Neilson's weak loss aversion).

    \item \textbf{Increasing Non-Symmetric Bet Aversion:}  
    Finally, taking the limits $\delta_2 \to \delta_4$ from below and $\delta_3 \to \delta_1$ from above and then scaling $\delta_1 = w \cdot \tilde{\delta}_1$ and $\delta_4 = w \cdot \tilde{\delta}_4$ captures increasing non-symmetric bet aversion, that is, increasing rejection of non-symmetric gambles as their common scale $w$ grows.
\end{itemize}
We emphasize that the term ``increasing rejection'' belongs to the utility-premium approach, where it refers to the intensity of rejection of a gamble. In summary, this equivalence and its derived special cases provide a unified framework for interpreting and analyzing various forms of loss aversion within a common inequality-based structure.

\subsubsection{Gamble-based Interpretation of Neilson's Strong Loss Aversion}
\emph{Neilson’s} notion of \emph{strong loss aversion} captures risk sensitivity \emph{between} binary gambles, rather than only relative to a single certain outcome. As such, it can be interpreted and recovered as an inequality involving the risk-sensitivity measure between binary gambles with equal probability distributions. Specifically, an equivalent condition for \emph{Neilson’s strong loss aversion} can be expressed using the risk-sensitivity index $\mathcal{RS}_{\text{gamble}}^{\text{equal}}$ as follows:

\begin{itemize}
    \item In the \textit{trivial ordering cases}, where either $\delta_3 < \delta_1$ and $\delta_4 < \delta_2$, or $\delta_1 < \delta_3$ and $\delta_2 < \delta_4$, the inequality
    \[
        0 > \frac{(x_0 - \delta_3) - (x_0 - \delta_1)}{(x_0 + \delta_4) - (x_0 + \delta_2)}
        >
        \mathcal{RS}_{\text{gamble}}^{\text{equal}}(x_0, \delta_1, \delta_2, \delta_3, \delta_4)
    \]
    captures the dominance of the outer gamble’s gain--loss structure over that of the inner gamble.

    \item In the \textit{nested cases}, where either $\delta_3 < \delta_1$ and $\delta_2 < \delta_4$, or $\delta_1 < \delta_3$ and $\delta_4 < \delta_2$, the inequality reverses:
    \[
        0 < \frac{(x_0 - \delta_3) - (x_0 - \delta_1)}{(x_0 + \delta_4) - (x_0 + \delta_2)}
        <
        \mathcal{RS}_{\text{gamble}}^{\text{equal}}(x_0, \delta_1, \delta_2, \delta_3, \delta_4),
    \]
    indicating that, under strong loss aversion, the inner gamble (with the tighter outcome range) is preferred to the outer one.
\end{itemize}

Hence, the interpretation is as follows:
\begin{itemize}
    \item From a subjective perspective, in the case of \emph{nested gambles}, the preference region shifts in favor of the nested (inner) gamble relative to the risk-neutral case, and therefore the region in which the non-nested (outer) gamble is preferred shrinks correspondingly.
    
    \item In the case of \emph{non-nested gambles}, the shape of the utility function tends to increase the preference for the gamble that is more shifted toward gains, reflecting a bias in favor of positively skewed outcomes.
\end{itemize}
Thus, the sign and ordering of \(\mathcal{RS}_{\text{gamble}}^{\text{equal}}\) provide a direct gamble-based test for strong loss aversion across different geometric configurations of binary gambles.

\subsection{Neilson Loss Aversion and Rothschild-Stiglitz Risk Aversion under EUT}
\label{subsec:neilson_rs_eut}
The average-slope characterization in Section~\ref{sec:strong_neilson_probabilistic} also clarifies the connection between Neilson's strong loss aversion and classical Rothschild-Stiglitz risk aversion under EUT. The connection is not an equivalence between the two concepts. Rather, both can be expressed through comparisons of average marginal utility over intervals. Rothschild-Stiglitz risk aversion concerns global aversion to mean-preserving spreads of outcomes under EUT, whereas Neilson's strong loss aversion concerns a reference-dependent comparison between marginal valuations in the loss and gain domains.

Consider a finite prospect
\[
    L=\{(x_1,p_1),\ldots,(x_n,p_n)\},
\]
where $x_i$ are outcomes, $p_i>0$, and $\sum_{i=1}^{n}p_i=1$. Let $x_i>x_j$. A two-point mean-preserving contraction transfers expected value from the higher outcome $x_i$ to the lower outcome $x_j$. For $\delta>0$, define
\[
    \widetilde x_i=x_i-\frac{\delta}{p_i},\qquad
    \widetilde x_j=x_j+\frac{\delta}{p_j},
\]
with all other outcomes unchanged. Since
\[
    p_i\left(x_i-\frac{\delta}{p_i}\right)
    +
    p_j\left(x_j+\frac{\delta}{p_j}\right)
    =
    p_i x_i+p_j x_j,
\]
the transformation preserves the mean. The modified prospect is
\[
    \widetilde L
    =
    \{(x_1,p_1),\ldots,(\widetilde x_i,p_i),\ldots,
    (\widetilde x_j,p_j),\ldots,(x_n,p_n)\}.
\]
A risk-averse EUT decision maker prefers such mean-preserving contractions, or equivalently dislikes the corresponding mean-preserving spreads.

Under EUT, the condition $\widetilde L\succeq L$ is equivalent to
\[
    p_i u\!\left(x_i-\frac{\delta}{p_i}\right)
    +
    p_j u\!\left(x_j+\frac{\delta}{p_j}\right)
    \geq
    p_i u(x_i)+p_j u(x_j).
\]
Rearranging gives
\[
    p_j\left[
    u\!\left(x_j+\frac{\delta}{p_j}\right)-u(x_j)
    \right]
    \geq
    p_i\left[
    u(x_i)-u\!\left(x_i-\frac{\delta}{p_i}\right)
    \right].
\]
Equivalently, using $\Delta_i=\delta/p_i$ and $\Delta_j=\delta/p_j$, this becomes
\[
    \frac{u(x_j+\Delta_j)-u(x_j)}{\Delta_j}
    \geq
    \frac{u(x_i)-u(x_i-\Delta_i)}{\Delta_i}.
    \label{eq:rs_average_slope_condition}
\]
Thus, a two-point mean-preserving contraction is preferred under EUT precisely when the average marginal utility over the lower-outcome interval is at least as large as the average marginal utility over the higher-outcome interval. If this condition holds for all such pairs of intervals, it is the standard average-slope characterization of concavity, and therefore of Rothschild-Stiglitz risk aversion under EUT.

The connection with Neilson's strong loss aversion is obtained by placing the two intervals on opposite sides of a reference point $x_0$. Suppose that the lower interval lies in the loss domain and the upper interval lies in the gain domain. Set
\[
    x_j=x_0-\delta_1,\qquad
    x_j+\Delta_j=x_0-\delta_3,
\]
and
\[
    x_i-\Delta_i=x_0+\delta_2,\qquad
    x_i=x_0+\delta_4,
\]
where
\[
    0\leq \delta_3<\delta_1,\qquad
    0\leq \delta_2<\delta_4.
\]
Then \eqref{eq:rs_average_slope_condition} becomes
\[
    \frac{
    u(x_0-\delta_3)-u(x_0-\delta_1)
    }{
    \delta_1-\delta_3
    }
    \geq
    \frac{
    u(x_0+\delta_4)-u(x_0+\delta_2)
    }{
    \delta_4-\delta_2
    }.
    \label{eq:neilson_rs_cross_domain}
\]
This is exactly the average-slope form of Neilson's strong loss aversion derived in Section~\ref{sec:strong_neilson_probabilistic}. Hence, Neilson's strong loss aversion can be interpreted as a reference-dependent, cross-domain analogue of the Rothschild-Stiglitz average-slope condition: losses below the reference point must have average marginal impact at least as large as gains above the reference point.

This interpretation also clarifies the difference between the two concepts. Rothschild-Stiglitz risk aversion under EUT is a global property: a decision maker prefers every mean-preserving contraction if and only if the Bernoulli utility function is concave. Neilson's strong loss aversion is instead a cross-domain property: it compares average marginal utility in the loss domain with average marginal utility in the gain domain. Therefore, Neilson's condition does not require global concavity of the utility function. This distinction is essential in CPT, where the utility function is typically convex over losses and concave over gains. In that setting, Neilson's condition imposes a dominance of loss-domain marginal valuation over gain-domain marginal valuation, rather than ordinary global risk aversion.

A second ordering case arises when the two-point transfer is large enough that the modified outcomes cross. Suppose
\[
    x_j < x_i-\Delta_i < x_j+\Delta_j < x_i.
\]
Let
\[
    a=x_j,\qquad b=x_i-\Delta_i,\qquad
    c=x_j+\Delta_j,\qquad d=x_i,
\]
so that $a<b<c<d$. The EUT preference $\widetilde L\succeq L$ becomes
\[
    \frac{u(c)-u(a)}{c-a}
    \geq
    \frac{u(d)-u(b)}{d-b}.
    \label{eq:rs_crossing_intervals}
\]
For a globally concave utility function, this inequality follows from the fact that secant slopes are nonincreasing as the interval moves to the right. Thus, the crossing case is also consistent with Rothschild-Stiglitz risk aversion under EUT. However, it is less directly tied to Neilson's loss-aversion interpretation, because each interval may span both sides of the reference point and therefore does not isolate a pure loss-domain interval versus a pure gain-domain interval.

The risk-seeking analogue is obtained by reversing the transfer. If $\delta<0$, then value is transferred from the lower outcome to the higher outcome, producing a two-point mean-preserving spread rather than a contraction. Preference for this spread under EUT is equivalent to the reverse average-slope inequality and is characterized globally by convexity of the utility function. This is the opposite of Rothschild-Stiglitz risk aversion and should therefore be interpreted as risk seeking, not as risk aversion.

In summary, the Rothschild-Stiglitz and Neilson frameworks are linked through the same mathematical object: average marginal utility over intervals. Under EUT, imposing the average-slope ordering globally yields concavity and, therefore, aversion to mean-preserving spreads. In the reference-dependent CPT setting, imposing the corresponding ordering across the loss and gain domains yields Neilson's strong loss aversion. The former is a global risk-aversion condition, whereas the latter is a reference-dependent loss-aversion condition.

Hence, under the EUT framework, if a two-point mean-preserving modification produces an outer gamble that is rejected in favor of the corresponding inner gamble, then the utility function satisfies the concavity-type average-slope ordering over the relevant interval. Conversely, if the modification produces an inner gamble that is rejected in favor of the corresponding outer gamble, then the utility function satisfies the convexity-type average-slope ordering over the relevant interval.

\section{Parametric CPT-utility functions: Limitations and a Generalization} \label{Section4}

In this section, we analyze structural limitations of the K\"{o}bberling--Wakker value-function specification under both general utility-theoretic considerations and standard CPT application scenarios. We highlight the resulting consistency restrictions and then introduce a generalized K\"{o}bberling--Wakker class that resolves these issues while preserving behavioral coherence.

\subsection{Limitations of the K\"{o}bberling--Wakker Specification}
A first limitation of the K\"{o}bberling--Wakker specification concerns the role of the curvature parameters $\alpha$ and $\beta$ in the induced probabilistic risk-sensitivity measures. These coefficients have a critical impact on probabilistic risk sensitivity in cases where the outcome values lie entirely within a single subdomain.
\begin{subequations}
    \begin{equation}
    \begin{split}
        (x,\delta) : 0 \leq \delta ,\; x_0 \leq x - \delta  \Rightarrow \mathcal{RS}_{stat} & = \frac{\exp(-\alpha(x-\delta-x_0)) - \exp(-\alpha(x-x_0))}{\exp(-\alpha(x-x_0)) - \exp(-\alpha(x+\delta-x_0))} \\
        & = \frac{\exp(\alpha \delta) - 1}{1 - \exp(-\alpha \delta)} 
    \end{split}
    \end{equation}
    \begin{equation}
    \begin{split}
        (x,\delta) : 0\leq \delta ,\; x + \delta \leq x_0  \Rightarrow \mathcal{RS}_{stat} & = \frac{\exp(\beta(x-\delta-x_0)) - \exp(\beta(x-x_0))}{\exp(\beta(x-x_0)) - \exp(\beta(x+\delta-x_0))} \\
        & = \frac{\exp(-\beta \delta) - 1}{1 - \exp(\beta \delta)} 
    \end{split}
    \end{equation}
    \begin{equation}
    \begin{split}
        (x,\delta) : x = x_0 ,\; 0 \leq \delta \Rightarrow \mathcal{RS}_{stat} & = \frac{\lambda \alpha}{\beta} \cdot \frac{1 - \exp(-\beta \delta)}{1 - \exp(-\alpha \delta)}.
    \end{split}
    \end{equation}
\end{subequations}
Considering that the Arrow-Pratt absolute risk aversion coefficient satisfies \(A(x)=\alpha\) for \(x_0 \leq x\) and \(A(x)=-\beta\) for \(x \leq x_0\), the exponential curvature parameters directly determine local risk sensitivity in each subdomain. In this specification, \(\alpha\) and \(\beta\) jointly determine both the curvature of the utility function (from nearly linear to highly steep or step-like behavior) and the corresponding probability thresholds.
In addition, because of the normalization terms in the denominators of the two branches, these parameters also affect the maximum attainable magnitude of the utility function within each subdomain. This creates an important limitation: within each branch, \(u(x)\) decreases as a function of \(\alpha\) or \(\beta\), so increasing curvature (through larger \(\alpha,\beta\)) simultaneously compresses the attainable range of value-function magnitudes. Hence, as the utility function becomes more step-like when \(\alpha\) and \(\beta\) increase, while risk sensitivity also changes across the gain and loss domains, the amplitude of the utility function is reduced. Moreover, \(\alpha\) and \(\beta\) play an additional role in shaping loss aversion.

\begin{figure}
    \centering
    \includegraphics[width=0.5\linewidth]{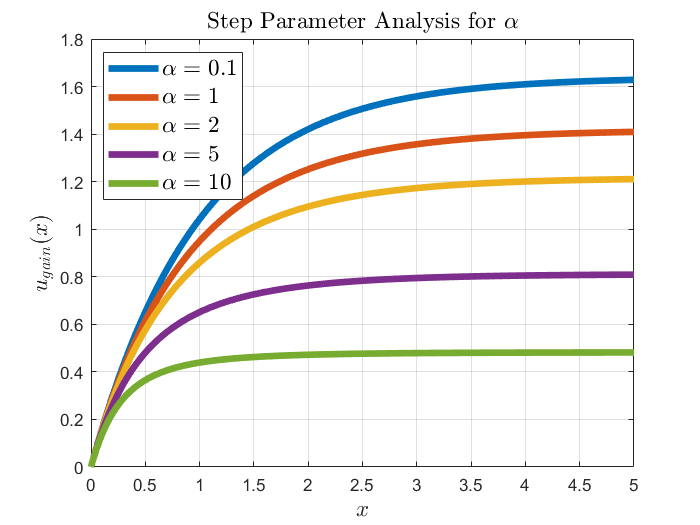}
   \caption{Effect of the parameter $\alpha$ on the gain branch of the utility function, for fixed values of the remaining parameters. Analogous behavior holds for the parameter $\beta$ in the loss subdomain.}
    \label{fig:alpha_parameter}
\end{figure}

A second limitation of the K\"{o}bberling--Wakker specification concerns the parameter $\lambda$, which acts as the scaling factor for loss aversion. This coefficient affects the left marginal slope at the reference point and the maximum attainable magnitude in the loss subdomain, without affecting probability thresholds when all outcomes lie entirely within the loss subdomain. However, $\lambda$ plays a crucial role in probabilistic risk sensitivity, particularly in determining probability thresholds, when outcomes span different subdomains, that is, when gains and losses interact. The limitation is that placing the loss-aversion coefficient exclusively in the loss branch restricts the flexibility of the specification, since it does not allow independent tuning of the maximum attainable value in the gain subdomain.

\begin{figure}
    \centering
    \includegraphics[width=0.5\linewidth]{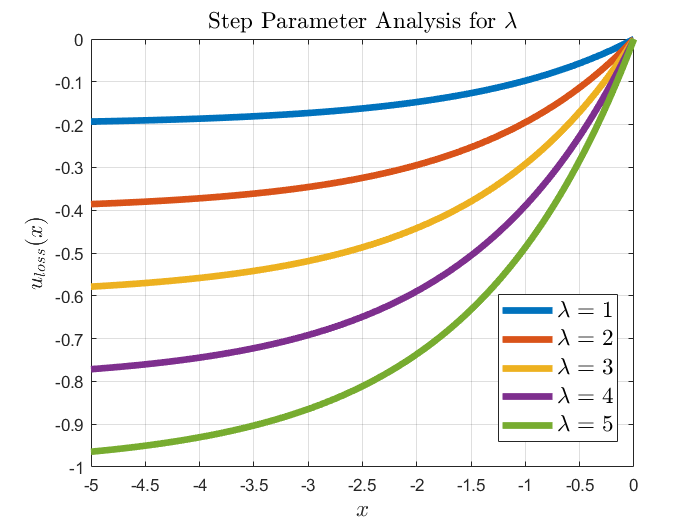}
    \caption{Effect of the parameter $\lambda$ on the loss branch of the utility function, for fixed values of the remaining parameters.}
    \label{fig:lambda_parameter}
\end{figure}

The third limitation of the K\"{o}bberling--Wakker specification is related to the constant absolute risk aversion (CARA) and constant probabilistic risk sensitivity (CPRS) properties imposed on each branch of the utility function. Since risk sensitivity is often closely tied to the reference (certain) state, these properties reduce the flexibility of the framework in capturing a broader range of heterogeneous risk preferences.

Another limitation concerns the influence of the domain on the risk sensitivity of each branch of the utility function. Existing formulations are typically defined on unbounded domains, which limits their applicability when outcomes are constrained to a bounded interval, as is often the case in real-world applications. Consider, for example, a domain restricted to $[-l_1, +l_2]$, with $l_1, l_2 \in \mathbb{R}^+$. In this case, the admissible values of $\delta$ for symmetric gain-loss comparisons are restricted to the interval $[0, \min\{l_1, l_2\}]$. This restriction introduces a significant issue: along the direction of maximum asymmetry, namely, when comparisons involve shifts in the interval $(\min\{l_1, l_2\}, \max\{l_1, l_2\})$, the valuation may become imbalanced. In particular, the perceived value of the maximum gain may exceed the absolute value of the maximum loss, contradicting the standard behavioral implication of loss aversion and potentially leading to non-intuitive or undesirable model behavior. Furthermore, the evaluation of symmetric gambles becomes dependent on the specific domain of the utility function, undermining consistency in modeling \emph{symmetric bet aversion}. Typically, symmetry is assumed in the outcome space, with gambles of the form $\{x-\delta, 1-p; x+\delta, p\}$, where both $x$ and $\delta$ may lie within the same subdomain (gains or losses) or span across both. However, in many practical and behavioral contexts, symmetry is interpreted differently.

\begin{figure}
    \centering
    \includegraphics[width=0.5\linewidth]{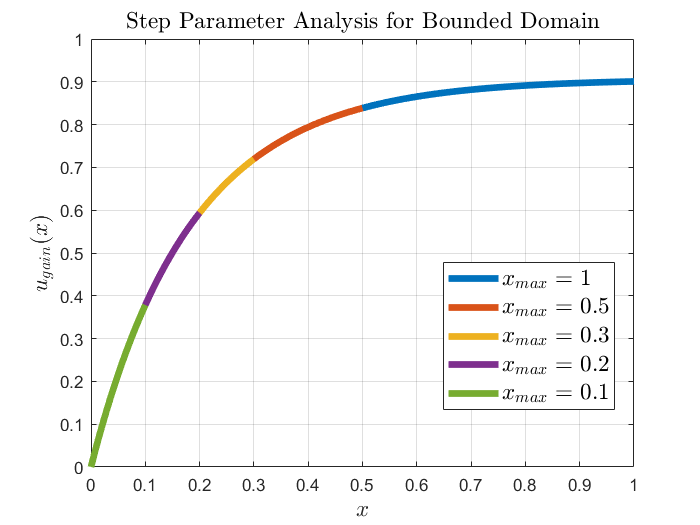}
    \caption{Effect of a bounded domain on the value-function specification, with the remaining parameters held fixed.}
    \label{fig:domain_influence}
\end{figure}

The final limitation of the K\"{o}bberling--Wakker specification is its incompatibility with \emph{Neilson’s aversion} definitions. Specifically, the exponential utility function exhibits a saturation effect as $x \to -\infty$, causing the marginal value of large losses to approach zero. This behavior violates Neilson-type weak-aversion requirements, which rule out loss-tail saturation by requiring sufficiently large relative loss valuation even at extreme outcome levels.

\subsection{An Extension Compatible with Neilson-type Aversion}
In this subsection, we develop a step-by-step extension of the proposed K\"{o}bberling--Wakker utility function. 

First, we extend the exponential-based CPT-utility function by introducing distinct parameters in the exponential coefficients and in the normalization terms, and by allowing a separate scaling factor in the gain branch, as follows:
\begin{equation} \label{eq:Extension_KW_v1}
    u^{(1)}(x) = \left\{
        \begin{array}{ll}
            \lambda_1\dfrac{1 - \exp\left(\dfrac{\alpha}{\gamma_1} (x - x_0)  \right)}{\alpha} & \text{for }x \geq x_0 \\
            \lambda_2 \dfrac{1 - \exp\left(\dfrac{\beta}{\gamma_2} (x - x_0)  \right)}{\beta} & \text{for } x < x_0 \\
        \end{array} 
    \right. 
\end{equation}
To begin specifying the parameter values, the utility function must be strictly increasing on the whole domain. This requirement leads to the following parameter restrictions:
\begin{equation}
    \frac{\lambda_1}{\gamma_1} < 0 \text{ and } \frac{\lambda_2}{\gamma_2} < 0
\end{equation}
By imposing monotonicity, we ensure that $u_+(x) > 0 \; \forall x > x_0$ and $u_-(x) < 0 \; \forall x < x_0$. To characterize the flexibility of the utility function in capturing the classical CPT S-shape, as well as other possible shapes, we proceed step by step and analyze each subdomain separately.
\begin{itemize}
    \item \underline{Gain subdomain}:
        \begin{itemize}
            \item \underline{Linear}: By taking $\alpha \to 0$ and applying L'H\^opital's rule to resolve the indeterminate form, we obtain
            \[
                u^{(1)}_+(x) = - \frac{\lambda_1}{\gamma_1} \cdot (x-x_0) 
            \]
            \item \underline{Convex}: By taking the second derivative with respect to $x$, we obtain the following condition on the parameters:
            \begin{equation}
                \text{Convex in gain subdomain: } \frac{\alpha}{\gamma_1} > 0
            \end{equation}
            
             \item \underline{Concave}: By taking the second derivative with respect to $x$, we obtain the following conditions on the parameters:
            \begin{equation}
                \text{Concave in gain subdomain: }  \frac{\alpha}{\gamma_1} < 0
            \end{equation}
        \end{itemize}
    \item \underline{Loss subdomain}:
         \begin{itemize}
            \item \underline{Linear}: By taking $\beta \to 0$ and applying L'H\^opital's rule to resolve the indeterminate form, we obtain
            \[
                u^{(1)}_-(x) = - \frac{\lambda_2}{\gamma_2} \cdot (x-x_0) 
            \]
            \item \underline{Convex}: By taking the second derivative with respect to $x$, we obtain the following condition on the parameters:
            \begin{equation}
                \text{Convex in loss subdomain: } \frac{\beta}{\gamma_2} > 0
            \end{equation}
            \item \underline{Concave}: By taking the second derivative with respect to $x$, we obtain the following condition on the parameters:
            \begin{equation}
                \text{Concave in loss subdomain:} \frac{\beta}{\gamma_2} < 0
            \end{equation}
        \end{itemize}
\end{itemize}

Proceeding to the effect of the parameters $\alpha$, $\beta$, $\gamma_1$, and $\gamma_2$ on the curvature of the utility function, and to the reasoning behind how this modification addresses some limitations of the classical framework, we first note that the previously mentioned limitation concerns the S-shaped value-function specification. Specifically, as $\alpha \to 0$, the gain branch of the utility function approaches a linear form. Conversely, as $\alpha \to +\infty$ with $\frac{\alpha}{\gamma_1} < 0$ in the gain subdomain, or as $\beta \to +\infty$ with $\frac{\beta}{\gamma_2} > 0$ in the loss subdomain, the utility function approximates a step function (see Figure~\ref{fig:alpha_parameter} for the gain subdomain; analogous behavior holds for the loss subdomain). 

In contrast, the parameters $\gamma_1$ and $\gamma_2$ exhibit the opposite effect under the same sign conditions (i.e., $\frac{\alpha}{\gamma_1} < 0$ in the gain subdomain and $\frac{\beta}{\gamma_2} > 0$ in the loss subdomain). For small values of $\gamma_1$ and $\gamma_2$, the utility function resembles a step function, whereas for large values it approaches a linear form while allowing independent control of the amplitude (see Figure~\ref{fig:gamma_parameter} for the gain subdomain; analogous behavior holds for the loss subdomain). In summary, by appropriately tuning the parameter pairs $\alpha$--$\gamma_1$ and $\beta$--$\gamma_2$, it is possible to recover the desired S-shape of the utility function while mitigating the undesirable effect of amplitude reduction.
\begin{figure}
    \centering
    \includegraphics[width=0.5\linewidth]{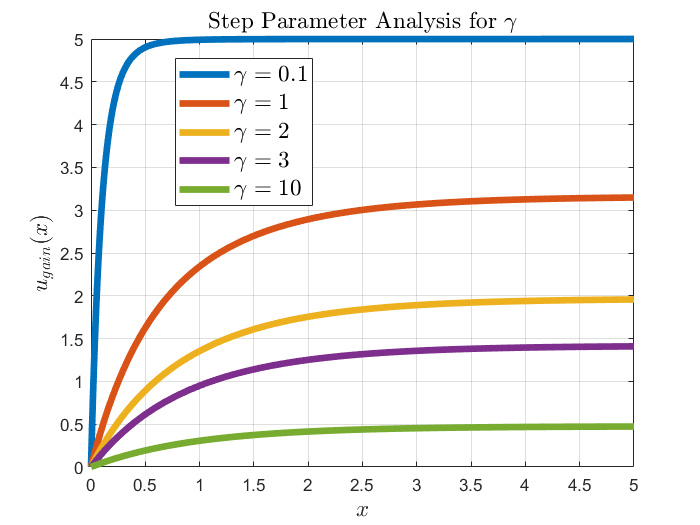}
    \caption{Effect of the parameter $\gamma_1$ on the gain branch of the utility function, for fixed values of the remaining parameters. Analogous behavior holds for the parameter $\gamma_2$ in the loss subdomain.}
    \label{fig:gamma_parameter}
\end{figure}

Moreover, the probabilistic risk sensitivity metric for a symmetric binary gamble around a certain outcome (stationary state) is modified as follows for the proposed utility function:
\begin{subequations}
    \begin{equation}
    \begin{split}
        (x,\delta) : 0 \leq \delta ,\; x_0 \leq x - \delta  \Rightarrow \mathcal{RS}_{stat} 
        & = \frac{\exp\left(-\frac{\alpha}{\gamma_1}\delta\right) - 1}{1 - \exp\left(\frac{\alpha}{\gamma_1}\delta\right)}
    \end{split}
    \end{equation}
    \begin{equation}
    \begin{split}
        (x,\delta) : 0\leq \delta ,\; x + \delta \leq x_0  \Rightarrow \mathcal{RS}_{stat}
        & = \frac{\exp\left(-\frac{\beta}{\gamma_2}\delta\right) - 1}{1 - \exp\left(\frac{\beta}{\gamma_2} \delta\right)}
    \end{split}
    \end{equation}
    \begin{equation}
    \begin{split}
        (x,\delta) : x = x_0 ,\; 0 \leq \delta \Rightarrow \mathcal{RS}_{stat}
        & = - \frac{\lambda_2 \alpha}{\lambda_1 \beta} \cdot \frac{1 - \exp\left(-\frac{\beta}{\gamma_2} \delta\right)}{1 - \exp\left(\frac{\alpha}{\gamma_1} \delta\right)}
    \end{split}
    \end{equation}
\end{subequations}
The Arrow-Pratt absolute risk aversion coefficient satisfies $A(x) = \frac{\alpha}{\gamma_1} \text{for} x_0 \leq x$ and $A(x) = -\frac{\beta}{\gamma_2} \text{for} x \leq x_0$. As we can see, relative to the K\"{o}bberling--Wakker utility function, the risk-sensitivity metrics differ only through the coefficient of the exponential term, which determines the curvature of the utility function in each subdomain.

Continuing with the effect of introducing a loss-aversion scaling factor in the gain subdomain, we show through $\mathcal{RS}^{equal}_{gamble}$ that the probability threshold depends only on the ratio $\frac{\lambda_2}{\lambda_1}$, and hence $\lambda_1$ determines the maximum absolute value in the gain subdomain. To begin, we divide the analysis into the following parts:
\begin{itemize}
    \item $x + \delta_4 < x_0$ or $x_0 < x - \delta_1$: The probability ratio threshold is given by 
    \begin{subequations}
        \begin{equation}
            \mathcal{RS}_{gamble}^{equal} = \frac{\exp\left(-\frac{\alpha}{\gamma_1} \delta_1)\right) - \exp\left(-\frac{\alpha}{\gamma_1} \delta_3\right)}{\exp\left(\frac{\alpha}{\gamma_1} \delta_2\right) - \exp\left(\frac{\alpha}{\gamma_1} \delta_4\right)} \;,\; x_0 < x - \delta_1
        \end{equation}
        \begin{equation}
            \mathcal{RS}_{gamble}^{equal} = \frac{\exp\left(-\frac{\beta}{\gamma_2} \delta_1\right) - \exp\left(-\frac{\beta}{\gamma_2} \delta_3\right)}{\exp\left(\frac{\beta}{\gamma_2} \delta_2\right) - \exp\left(\frac{\beta}{\gamma_2} \delta_4\right)} \;,\; x + \delta_4 < x_0
        \end{equation}
    \end{subequations}    
    \item $\{x_0 < x +\delta_4,x + \delta_2 < x_0\}$: The probability ratio threshold is given by
    \begin{equation}
        \mathcal{RS}_{gamble}^{equal} = \frac{\exp\left(-\frac{\beta}{\gamma_2} \delta_1\right) - \exp\left(-\frac{\beta}{\gamma_2} \delta_3\right)}{-1+\exp\left(\frac{\beta}{\gamma_2} \delta_2\right) + \frac{\lambda_1 \beta}{\lambda_2 \alpha} \left(1-\exp\left(\frac{\alpha}{\gamma_1} \delta_4\right)\right)}
    \end{equation}
    \item $\{x_0 < x + \delta_2, x - \delta_3 < x_0\}$: The probability ratio threshold is equal to
    \begin{equation}
        \mathcal{RS}_{gamble}^{equal} = \frac{\lambda_2 \alpha}{\lambda_1 \beta} \frac{\exp\left(-\frac{\beta}{\gamma_2} \delta_1\right) - \exp\left(-\frac{\beta}{\gamma_2} \delta_3\right)}{\exp\left(\frac{\alpha}{\gamma_1} \delta_2\right) - \exp\left(\frac{\alpha}{\gamma_1} \delta_4\right)}
    \end{equation}
    \item $\{x_0 < x - \delta_3, x - \delta_1 < x_0\}$: The probability ratio threshold is equal to
    \begin{equation}
        \mathcal{RS}_{gamble}^{equal} = \frac{-\frac{\lambda_2 \alpha}{\lambda_1 \beta}\left(1-\exp\left(-\frac{\beta}{\gamma_2} \delta_1\right)\right) + 1 - \exp\left(-\frac{\alpha}{\gamma_1} \delta_3\right)}{\exp\left(\frac{\alpha}{\gamma_1} \delta_2\right) - \exp\left(\frac{\alpha}{\gamma_1} \delta_4\right)}
    \end{equation}
\end{itemize}

Next, we introduce a new function \(g(x,x_0)\) in each subdomain, defined as a composition with the function \(u(x)\) in Equation~\ref{eq:Extension_KW_v1}, in order to incorporate the effect of the reference (certain) state on the probabilistic risk-sensitivity metric under the CPT applications and in order to give greater flexibility in the definition of curvature, rate at reference point and maximum value under the more general utility-theoretic applications, especially in the cases of bounded domains.
\begin{equation}\label{eq:Extension_KW_v2}
u^{(2)}(x)=
\begin{cases}
\lambda_1 \dfrac{1-\exp\!\left(\dfrac{\alpha}{\gamma_1}g_1(x,x_0)\right)}{\alpha}, & x \ge x_0, \\[6pt]
\lambda_2 \dfrac{1-\exp\!\left(\dfrac{\beta}{\gamma_2}g_2(x,x_0)\right)}{\beta}, & x < x_0.
\end{cases}
\end{equation}
In order to preserve monotonicity, the curvature of each branch of \(u(x)\) in Equation~\ref{eq:Extension_KW_v1}, and the normalization at the reference point, the functions \(g_1(x,x_0)\) and \(g_2(x,x_0)\) must satisfy the following properties, taking into account the properties of composite functions:
\begin{subequations}
    \begin{equation}
        g_i(x_0,x_0)=0, \qquad i\in\{1,2\}
    \end{equation}
    \begin{equation}
        g_i(\,\cdot\,,x_0)\ \text{is strictly increasing in } x, \qquad i\in\{1,2\}
    \end{equation}
    \begin{equation}
    \begin{split}
        & \text{If } u^{(1)}_{\pm}\ \text{is affine, convex, or concave on its subdomain, then } \\
        & \quad \quad \quad \quad \quad g_i(\,\cdot\,,x_0)\ \text{has the corresponding shape}
    \end{split}
    \end{equation}
    \begin{equation}
        \text{If } u^{(2)}_{\pm}\ \text{is required to satisfy CARA or CPRS, then } g_i(\,\cdot\,,x_0)\ \text{must be affine}
    \end{equation}
\end{subequations}
The effect of the functions $g_{1}(x,x_0)$ and $g_{2}(x,x_0)$ on risk sensitivity within each subdomain can be seen in the expression for $\mathcal{RS}$:
\begin{subequations}
    \begin{equation}
    \begin{split}
        (x,\delta) : 0 \leq \delta,\; x_0 \leq x - \delta \Rightarrow \mathcal{RS}_{stat}
        &=
        \frac{\exp\left(\frac{\alpha}{\gamma_1}g_1(x-\delta,x_0)\right)-\exp\left(\frac{\alpha}{\gamma_1}g_1(x,x_0)\right)}
        {\exp\left(\frac{\alpha}{\gamma_1}g_1(x,x_0)\right)-\exp\left(\frac{\alpha}{\gamma_1}g_1(x+\delta,x_0)\right)}
    \end{split}
    \end{equation}
    \begin{equation}
    \begin{split}
        (x,\delta) : 0 \leq \delta,\; x + \delta \leq x_0 \Rightarrow \mathcal{RS}_{stat}
        &=
        \frac{\exp\left(\frac{\beta}{\gamma_2}g_2(x-\delta,x_0)\right)-\exp\left(\frac{\beta}{\gamma_2}g_2(x,x_0)\right)}
        {\exp\left(\frac{\beta}{\gamma_2}g_2(x,x_0)\right)-\exp\left(\frac{\beta}{\gamma_2}g_2(x+\delta,x_0)\right)}
    \end{split}
    \end{equation}
    \begin{equation}
    \begin{split}
        (x,\delta) : x = x_0,\; 0 \leq \delta \Rightarrow \mathcal{RS}_{stat}
        &=
        - \frac{\lambda_2 \alpha}{\lambda_1 \beta}
        \cdot
        \frac{1-\exp\left(\frac{\beta}{\gamma_2}g_2(x_0-\delta,x_0)\right)}
        {1-\exp\left(\frac{\alpha}{\gamma_1}g_1(x_0+\delta,x_0)\right)}
    \end{split}
    \end{equation}
\end{subequations}
In Figures~\ref{fig:RS_stat_KT_gain},\ref{fig:RS_stat_KT_loss},\ref{fig:RS_stat_extended_KW_gain},\ref{fig:RS_stat_extended_KW_loss}, we present the graphical representation of the effect of $g_{1,2}(x)$ on the risk-sensitivity metric for the case of an S-shaped utility function, comparing the behavior of the Kahneman--Tversky approach with our modified K\"{o}bberling--Wakker specification. As we can see, the utility function induces a stronger expression of the risk behavior within each subdomain as we move toward the reference point in both approaches.
\begin{figure}[!h]
\centering
\begin{subfigure}{.5\textwidth}
    \centering
    \includegraphics[width=\textwidth]{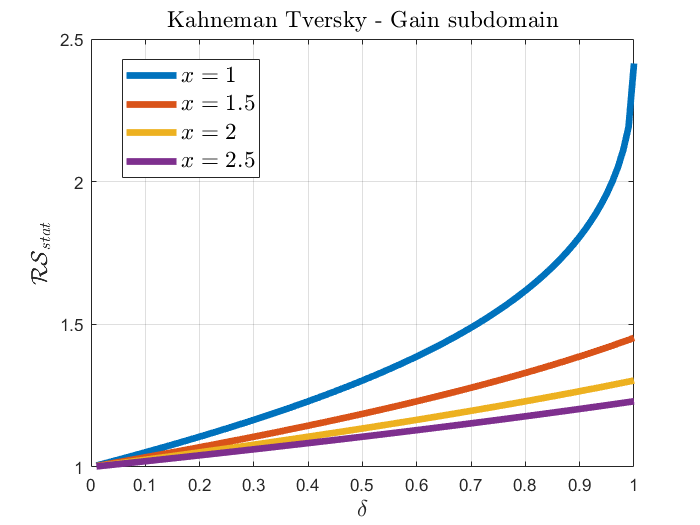}
    \caption{$u_+(x)=x^{0.5}$}
    \label{fig:RS_stat_KT_gain}
\end{subfigure}%
\begin{subfigure}{.5\textwidth}
    \centering
    \includegraphics[width=\textwidth]{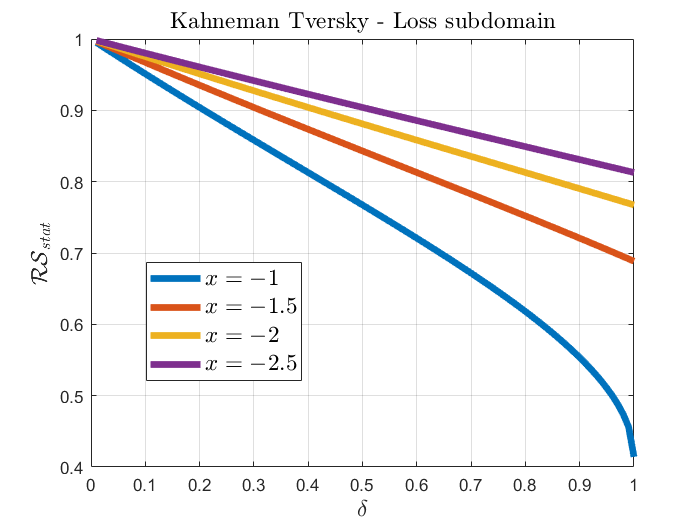}
    \caption{$u_-(x)=-2(-x)^{0.5}$}
    \label{fig:RS_stat_KT_loss}
\end{subfigure}
\begin{subfigure}{.5\textwidth}
    \centering
    \includegraphics[width=\textwidth]{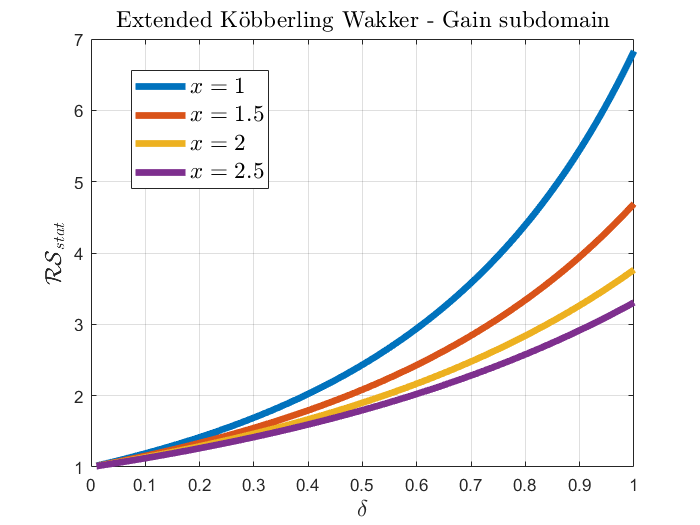}
    \caption{$u_+(x)=\dfrac{1-\exp\!\left(-2(1-\exp(-x))\right)}{3}$}
    \label{fig:RS_stat_extended_KW_gain}
\end{subfigure}%
\begin{subfigure}{.5\textwidth}
    \centering
    \includegraphics[width=\textwidth]{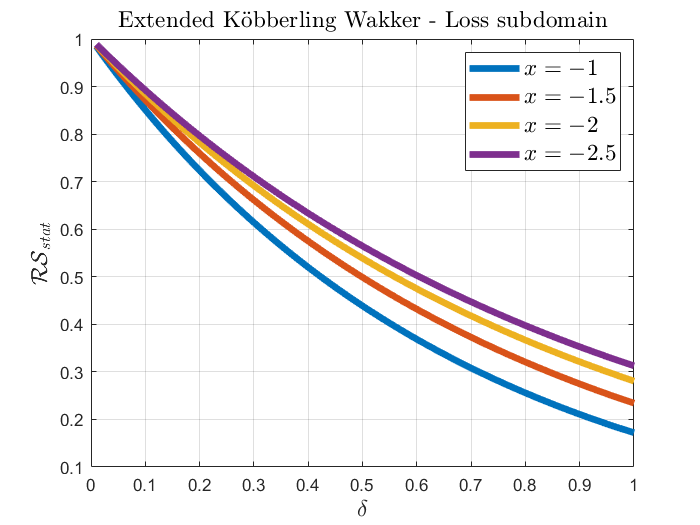}
    \caption{$u_-(x)=-2\,\dfrac{1-\exp\!\left(\frac{5}{3}(\exp(x)-1)\right)}{2}$}
    \label{fig:RS_stat_extended_KW_loss}
\end{subfigure}
\caption[Risk-sensitivity comparison with Kahneman--Tversky]{Graphical representation of the effect of $g_{1,2}(x)$ on the risk-sensitivity metric: comparison of the trend under the Kahneman--Tversky specification and the extended K\"{o}bberling--Wakker specification.}
\label{fig:RS_stat_comparison}
\end{figure}

The specification in Equation~\ref{eq:Extension_KW_v2} still cannot, by itself, accommodate Neilson-type aversion definitions for an S-shaped utility function. Indeed, we divide the analysis into three cases, taking into consideration the definition of \emph{Neilson's strong aversion}:
\begin{itemize}
    \item If $g_2(x)$ has a finite saturation effect, meaning that $\lim_{x \to -\infty} g_2(x)=c \in \mathbb{R}_{-}^{*}$ and \newline $\lim_{x \to -\infty} g_2'(x)=0$, then $\lim_{x \to -\infty} \frac{\partial u_-^{(2)}(x)}{\partial x}=0$.

    \item If $g_2(x)$ has an infinite saturation effect, meaning that $\lim_{x \to -\infty} g_2(x)=-\infty$ and \newline $\lim_{x \to -\infty} g_2'(x)=0$, then $\lim_{x \to -\infty} \frac{\partial u_-^{(2)}(x)}{\partial x}=0$.

    \item If $g_2(x)$ has asymptotically linear behavior at $-\infty$, meaning that $\lim_{x \to -\infty} g_2(x)=-\infty$ and $\lim_{x \to -\infty} g_2'(x)=c \in \mathbb{R}_{+}^{*}$, then $\lim_{x \to -\infty} \frac{\partial u_-^{(2)}(x)}{\partial x}=0$.
\end{itemize}
Hence, we introduce an additional component, inspired by the equivalent characterization of \emph{Neilson's weak aversion}, in order to provide greater flexibility in the structure of the proposed utility function, as follows:
\begin{equation}\label{eq:Extension_KW_v3}
u^{(3)}(x)=
\begin{cases}
\lambda_1 \dfrac{1-\exp\!\left(\dfrac{\alpha}{\gamma_1}g_1(x,x_0)\right)}{\alpha}, & x \ge x_0,\\[6pt]
\lambda_2 \dfrac{1-\exp\!\left(\dfrac{\beta}{\gamma_2}g_2(x,x_0)\right)}{\beta}+\lambda_3(x-x_0), & x < x_0.
\end{cases}
\end{equation}
\begin{proposition}[Tail-slope compatibility with Neilson-type aversion]\label{prop:tail_slope_v3}
Assume that $u^{(3)}$ is strictly increasing and S-shaped on the gain and loss subdomains, with $u^{(3)}(x_0)=0$. If
$\left. \frac{\partial u^{(2)}_+(x)}{\partial x} \right\vert _{x=x_0} \leq \lambda_3$,
then the linear loss-tail component in Equation~\ref{eq:Extension_KW_v3} rules out loss-side saturation and provides the tail-slope condition required for Neilson-type weak-to-strong compatibility. In particular, under the regularity assumptions of Theorem~\ref{Neilson_S_shape_equivalence}, the resulting specification satisfies Neilson's strong aversion whenever it satisfies Neilson's weak aversion.
\end{proposition}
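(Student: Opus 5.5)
The plan is to reduce the proposition to Theorem~\ref{Neilson_S_shape_equivalence} by checking its hypotheses. The only one that is not given directly is the tail-slope condition \eqref{eq:NS_tail_slope}. So the core task is to show that
\[
\inf_{x<x_0} \frac{\partial u^{(3)}_-(x)}{\partial x} \;\ge\; \left.\frac{\partial u^{(3)}_+(x)}{\partial x}\right\vert_{x=x_0^+}.
\]
The gain branch of $u^{(3)}$ coincides with that of $u^{(2)}$. The right-hand side is therefore exactly $\left.\frac{\partial u^{(2)}_+(x)}{\partial x}\right\vert_{x=x_0}$, which the hypothesis bounds above by $\lambda_3$.

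\textbf{Step 1: decompose the loss-branch derivative.} For $x<x_0$, differentiate the loss branch of Equation~\ref{eq:Extension_KW_v3}:
\[
\frac{\partial u^{(3)}_-(x)}{\partial x} = -\frac{\lambda_2}{\gamma_2}\, g_2'(x,x_0)\,\exp\!\left(\frac{\beta}{\gamma_2}g_2(x,x_0)\right) + \lambda_3 .
\]
The required properties of $g_2$ give $g_2'>0$. Strict monotonicity of $u^{(1)}$ gives $\lambda_2/\gamma_2<0$. Hence the first term is nonnegative, and $\frac{\partial u^{(3)}_-}{\partial x}(x)\ge\lambda_3$ for every $x<x_0$. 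The same conclusion follows more robustly from the S-shape assumption. The exponential component $u^{(2)}_-$ is convex and strictly increasing on $(-\infty,x_0)$, so its derivative is positive, and adding the linear term $\lambda_3(x-x_0)$ shifts the derivative up by exactly $\lambda_3$. Taking the infimum gives $\inf_{x<x_0}\frac{\partial u^{(3)}_-}{\partial x}\ge \lambda_3 \ge \frac{\partial u^{(2)}_+}{\partial x}(x_0)$, which is \eqref{eq:NS_tail_slope}.

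\textbf{Step 2: check the remaining hypotheses.} The hypotheses of Theorem~\ref{Neilson_S_shape_equivalence} are assumed in the statement: $u^{(3)}$ is strictly increasing, S-shaped, $C^2$ away from $x_0$, nonsmooth at most at $x_0$, and normalized with $u^{(3)}(x_0)=0$. The normalization holds because $g_1(x_0,x_0)=g_2(x_0,x_0)=0$. Convexity of the loss branch is also preserved, since the added component is affine. Applying Theorem~\ref{Neilson_S_shape_equivalence} then yields strong aversion whenever weak aversion holds.

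\textbf{Step 3: the no-saturation clause.} To make the claim that $u^{(3)}$ ``rules out loss-side saturation'' explicit, I would note two consequences of the bound from Step 1. First, $\frac{\partial u^{(3)}_-}{\partial x}\ge\lambda_3$ together with $\lambda_3\ge \frac{\partial u^{(2)}_+}{\partial x}(x_0)>0$ forces $\lim_{x\to-\infty}u^{(3)}(x)=-\infty$. Second, $\liminf_{x\to-\infty}\frac{\partial u^{(3)}_-}{\partial x}\ge\lambda_3>0$. Together these exclude both saturation modes discussed after Theorem~\ref{weak_loss_aversion_equivalent}, in contrast with the three cases listed for $u^{(2)}$.

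\textbf{Main obstacle.} I expect the main difficulty to be sign bookkeeping. The statement does not spell out the sign conventions, so the claim that the exponential part contributes a nonnegative slope relies on the earlier restrictions: $\lambda_2/\gamma_2<0$, $\beta/\gamma_2>0$ for loss-side convexity, and monotone $g_2$. I would handle this by invoking convexity plus monotonicity of $u^{(2)}_-$ rather than the explicit formula. A secondary subtlety is that strong aversion in Theorem~\ref{Neilson_S_shape_equivalence} is obtained only conditionally on weak aversion, and the proposition claims no more than that. So I would not try to derive weak aversion from $\lambda_3$ alone.
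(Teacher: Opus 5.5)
Your proposal is correct and follows the paper's own argument, made explicit: the linear term $\lambda_3(x-x_0)$ bounds the loss-branch slope below by $\lambda_3$, which uses the inherited restriction $\lambda_2/\gamma_2<0$ and monotone $g_2$ that the paper leaves implicit, so the hypothesis yields \eqref{eq:NS_tail_slope} and Theorem~\ref{Neilson_S_shape_equivalence} applies. As a small aside, that bound together with gain-side concavity already gives $u'(z)\le u'(x_0^+)\le\lambda_3\le u'(y)$ directly, so strong (and hence weak) aversion holds outright, and your caution about not deriving weak aversion from $\lambda_3$ is unnecessary, since the weak-aversion qualifier is vacuous here.
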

\begin{proof}
The added term $\lambda_3(x-x_0)$ makes the limiting loss-domain marginal value bounded below by $\lambda_3$. The stated inequality therefore ensures that the loss-tail marginal value is no smaller than the right marginal value at the reference point, which is precisely the no-saturation condition in Theorem~\ref{Neilson_S_shape_equivalence}.
\end{proof}

\subsection{Evaluation of Selected Characteristic Utility Function Scenarios}
To better understand how different decision-maker profiles influence decision-making under uncertainty, we examine a series of representative utility function cases.

First, we present the \textbf{CPT benchmark} profile, shown in Figure~\ref{Restrained_Profiteer}. This profile exhibits risk-seeking behavior in the loss subdomain and risk-averse behavior in the gain subdomain. In practice, this translates to a tendency to move toward the reference point $x_0$ when facing losses (i.e., a preference for fair symmetric binary gambles involving potential losses), and a more cautious approach in the gain domain, preferring symmetric gambles only when outcomes lie within the gain subdomain and involve low uncertainty. This type of behavior across both subdomains aligns with the classical assumptions of CPT.

Second, the \textbf{loss-conservative/gain-seeking} decision-maker profile is illustrated in Figure~\ref{Defeatist_Greedy}. This decision maker displays a cautious attitude in the loss subdomain, requiring a high degree of certainty about a profitable outcome before engaging in a symmetric binary gamble. In contrast, in the gain subdomain, the decision maker demonstrates strong mobility toward higher rewards, indicating a willingness to accept greater uncertainty when pursuing larger gains. This asymmetric attitude toward risk characterizes the combination of loss conservatism and gain seeking.

Another possible decision-maker profile is the \textbf{globally risk-seeking} profile, shown in Figure~\ref{Greedy_Profiteer}. In this case, the decision maker exhibits risk-seeking behavior in both subdomains: a tendency to move toward the reference point $x_0$ in the loss domain, and toward $+\infty$ in the gain domain. This reflects a consistent preference for high-risk, high-reward gambles, regardless of whether the outcomes lie in the loss or gain subdomain.

By contrast, a decision maker who exhibits risk aversion in both subdomains can be classified as a \textbf{globally risk-averse} profile, as shown in Figure~\ref{Defeatist}. This decision maker prefers certainty and avoids engaging in symmetric binary gambles unless the outcomes are highly predictable, regardless of whether they fall in the loss or gain subdomain.

Additionally, some decision makers may adopt a purely rational decision-making approach, with minimal behavioral distortions. This profile, illustrated in Figure~\ref{Non_Behavioral_Loss_Averse}, represents the \textbf{non-behavioral profile with loss aversion}. Such a decision maker is risk-neutral across both subdomains and may display only minimal loss aversion. Their utility function reflects a linear, unmodulated response to outcomes, in contrast to profiles influenced by psychological or other subjective factors.

\begin{figure}[!t]
\centering
\begin{subfigure}{.33\textwidth}
    \centering
    \includegraphics[width=\textwidth]{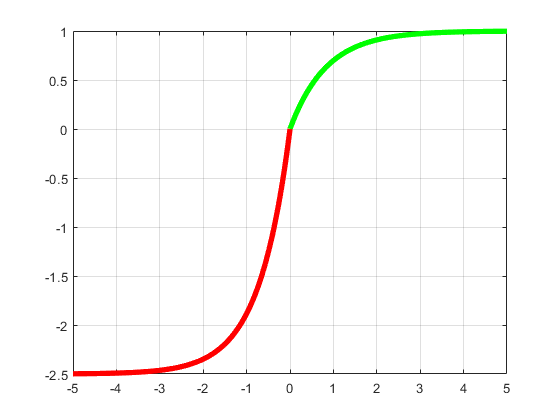}
    \caption{CPT benchmark}
    \label{Restrained_Profiteer}
\end{subfigure}%
\begin{subfigure}{.33\textwidth}
    \centering
    \includegraphics[width=\textwidth]{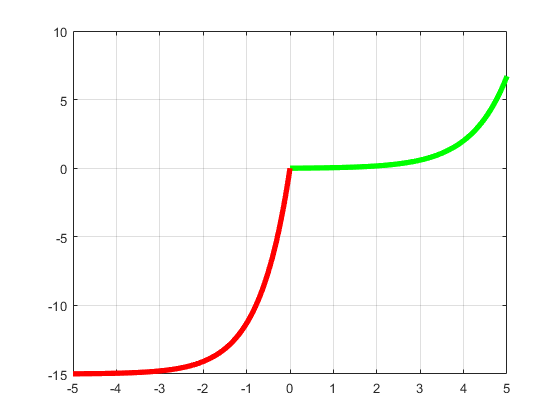}
    \caption{Globally risk-seeking}
    \label{Greedy_Profiteer}
\end{subfigure}%
\begin{subfigure}{.33\textwidth}
    \centering
    \includegraphics[width=\textwidth]{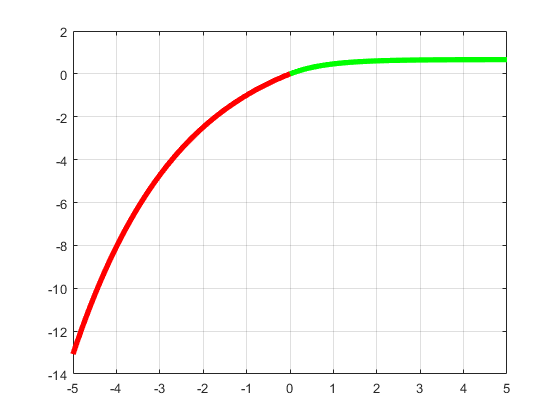}
    \caption{Globally risk-averse}
    \label{Defeatist}
\end{subfigure}
\begin{subfigure}{.33\textwidth}
    \centering
    \includegraphics[width=\textwidth]{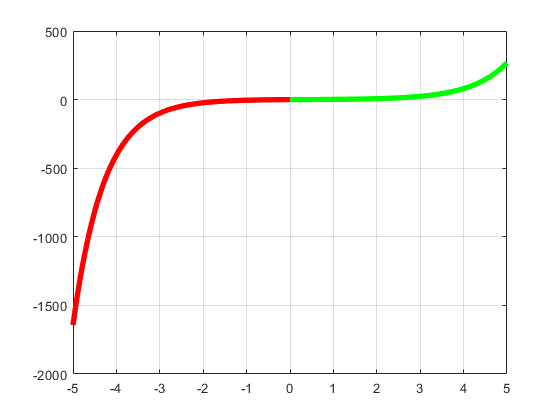}
    \caption{Loss-conservative/gain-seeking}
    \label{Defeatist_Greedy}
\end{subfigure}
\begin{subfigure}{.33\textwidth}
    \centering
    \includegraphics[width=\textwidth]{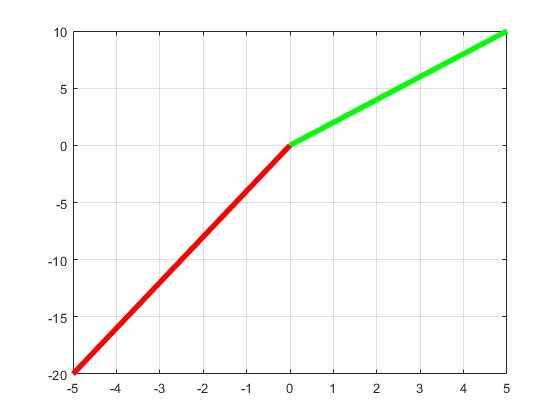}
    \caption{Non-Behavioral with loss aversion}
    \label{Non_Behavioral_Loss_Averse}
\end{subfigure}
\caption[Illustrative decision-maker profiles]{Illustrative decision-maker profiles under the proposed utility-function specification.}
\end{figure}

\section{Conclusion}\label{Section5}

This work provides a rigorous analysis and a formal bridge between \emph{Neilson's weak} and \emph{strong aversion}, revealing the conditions under which these notions converge or diverge under a nonsmooth S-shaped value-function specification. Furthermore, we provide a gamble-based interpretation of \emph{Neilson's strong aversion}.

Moreover, we identify inherent limitations of the standard K\"{o}bberling--Wakker value-function specification, especially the CARA and CPRS properties, and the amplitude reduction induced by increasing the $\alpha$, $\beta$ parameters, and we clarify modeling issues that arise in application scenarios, such as the treatment of asymmetric outcomes under symmetric modeling assumptions and the handling of bounded domains. To address these limitations, we propose a modified version of the K\"{o}bberling--Wakker utility function that enhances the expressive capacity of CPT, allowing for a more accurate representation of individual behavior in complex decision environments.

The proposed generalized framework opens several promising research directions. First, it calls for empirical validation across diverse decision-making contexts, such as intertemporal choice, environmental risk, or AI-human interaction systems. Second, it offers potential applications in algorithmic design for decision-support systems that must account for nuanced and context-dependent human risk preferences. Finally, it raises foundational questions about the normative versus descriptive role of utility functions in behavioral modeling, especially as decision systems evolve to be more adaptive and personalized.

In conclusion, by strengthening the analytical foundations of CPT and extending its parametric flexibility, this work contributes to a more robust behavioral decision-theoretic framework for modeling judgment and choice under uncertainty.

\section*{Acknowledgment}
This work is supported by the European Research Council (ERC) under the EU’s Horizon 2020 research and innovation programme (Grant agreement No. 101003431). S. Vaidanis is partially supported by the Onassis Foundation (Scholarship ID: F ZU 076-1/2024-2025). 

\begin{appendices} \label{proofs}

\section{Proofs and Technical Preliminaries}
In this section, we present the proofs of the theorems referenced in the main text. Before proceeding, we introduce several useful propositions that serve as technical preliminaries for the subsequent proofs.

\subsection{Proof of Theorem \ref{increasing_non_symmetric}}
\textbf{(Direct proof)}:
Consider a fair non-symmetric binary gamble around the reference point $x_0$, with outcomes $x_0-w\delta_1$ and $x_0+w\delta_2$, where $\delta_1,\delta_2>0$ and $w\in\mathbb{R}_+$. The expected value of the gamble is
\[
\mathbb{E}[\text{gamble}]
=
\frac{\delta_2}{\delta_1+\delta_2}(x_0-w\delta_1)
+
\frac{\delta_1}{\delta_1+\delta_2}(x_0+w\delta_2)
=
x_0,
\]
which confirms that the gamble is fair around $x_0$.

Define the utility premium for this decision setup by
\[
\mathcal{U}(x_0,\delta_1,\delta_2,w)
:=
u(x_0)
-\frac{\delta_2}{\delta_1+\delta_2}\,u(x_0-w\delta_1)
-\frac{\delta_1}{\delta_1+\delta_2}\,u(x_0+w\delta_2).
\]

To express increasing aversion to scaled gambles, we require \(\mathcal{U}(x_0,\delta_1,\delta_2,w)\) to be a strictly increasing function of \(w\). Differentiating with respect to \(w\), we obtain
\[
\begin{aligned}
\frac{\partial}{\partial w}\mathcal{U}(x_0,\delta_1,\delta_2,w)
&=
-\frac{\delta_2}{\delta_1+\delta_2}(-\delta_1)\,u'(x_0-w\delta_1)
-\frac{\delta_1}{\delta_1+\delta_2}(\delta_2)\,u'(x_0+w\delta_2) \\
&=
\frac{\delta_1\delta_2}{\delta_1+\delta_2}
\left(u'(x_0-w\delta_1)-u'(x_0+w\delta_2)\right).
\end{aligned}
\]

Thus,
\[
\frac{\partial}{\partial w}\mathcal{U}(x_0,\delta_1,\delta_2,w)>0
\quad\Longleftrightarrow\quad
u'(x_0+w\delta_2)<u'(x_0-w\delta_1),
\qquad \forall\, w>0,
\]
which is the desired condition.

\textbf{(Converse proof)}:
Suppose that
\[
u'(x_0+w\delta_2)<u'(x_0-w\delta_1), \qquad \forall\, w>0.
\]
Then, from the derivative expression established above, it follows that
\[
\frac{\partial}{\partial w}\mathcal{U}(x_0,\delta_1,\delta_2,w)>0, \qquad \forall\, w>0.
\]
Hence, \(\mathcal{U}(x_0,\delta_1,\delta_2,w)\) is strictly increasing in \(w\), which implies increasing aversion to scaled versions of the fair non-symmetric gamble.

\subsection{Proof of Theorem \ref{weak_loss_aversion_equivalent}}
\textbf{(Direct proof)}:
\emph{Neilson's weak loss aversion} states that for any \(y<x_0<z\),
\[
\frac{u(z)}{z-x_0} \leq \frac{u(y)}{y-x_0}.
\]
Taking the limit as \(z\to x_0^+\), we obtain
\[
\lim_{z\to x_0^+}\frac{u(z)}{z-x_0}=u'(x_0^+),
\]
and therefore
\[
\frac{u(y)}{y-x_0} \geq u'(x_0^+), \qquad \forall\, y<x_0.
\]
Since \(y-x_0<0\), multiplying both sides by \(y-x_0\) reverses the inequality, yielding
\[
u(y) \leq u'(x_0^+)(y-x_0), \qquad \forall\, y<x_0,
\]
which is the claimed inequality.

\textbf{(Converse proof)}:
Suppose that
\[
u(y) \leq u'(x_0^+)(y-x_0), \qquad \forall\, y<x_0.
\]
We want to show that this implies \emph{Neilson's weak loss aversion}, i.e.,
\[
\frac{u(z)}{z-x_0} \leq \frac{u(y)}{y-x_0}, \qquad \forall\, y<x_0<z.
\]

By the Mean Value Theorem, for any \(z>x_0\), there exists \(\xi\in(x_0,z)\) such that
\[
\frac{u(z)-u(x_0)}{z-x_0}=u'(\xi).
\]
Since \(u(x_0)=0\), this becomes
\[
\frac{u(z)}{z-x_0}=u'(\xi).
\]

Because \(u\) is concave on the gain domain and strictly increasing, \(u'\) is decreasing on \((x_0,\infty)\). Hence,
\[
u'(\xi)\le u'(x_0^+).
\]
Under strict concavity, the inequality is strict:
\[
u'(\xi)<u'(x_0^+).
\]

On the other hand, the assumption \(u(y) \leq u'(x_0^+)(y-x_0)\) and the fact that \(y-x_0<0\) imply
\[
\frac{u(y)}{y-x_0} \geq u'(x_0^+).
\]

Combining the previous inequalities, we obtain
\[
\frac{u(z)}{z-x_0}=u'(\xi) \leq u'(x_0^+) \leq \frac{u(y)}{y-x_0},
\]
which proves \emph{Neilson’s weak loss aversion}. Therefore, the two statements are equivalent.

\subsection{Proof of Theorem \ref{Neilson_S_shape_equivalence}}
Assume that \emph{Neilson's weak loss aversion} holds. By Theorem~\ref{weak_loss_aversion_equivalent}, this is equivalent to
\[
u(x) \leq u'(x_0^+)(x-x_0), \qquad \forall x<x_0.
\]
Moreover, under the S-shaped assumption (convexity on the loss domain and concavity on the gain domain), we also have
\[
u'(x_0^-)(x-x_0) \leq u(x), \qquad \forall x<x_0,
\]
and therefore
\[
u'(x_0^-)(x-x_0) \leq u(x) \leq u'(x_0^+)(x-x_0), \qquad \forall x<x_0.
\]

Since \(x-x_0\to -\infty\) as \(x\to -\infty\), both bounding linear functions tend to \(-\infty\). Hence, by the squeeze theorem,
\[
\lim_{x\to -\infty}u(x)=-\infty.
\]

Assume, in addition, that the loss-domain marginal value satisfies \eqref{eq:NS_tail_slope}. Then, for every \(y<x_0\),
\[
u'(y)\geq \inf_{x<x_0}u'(x)\ge u'(x_0^+).
\]

On the gain domain, concavity implies that \(u'\) is decreasing on \((x_0,\infty)\), so
\[
u'(z)\leq u'(x_0^+), \quad \forall z>x_0.
\]
Combining the two inequalities, we obtain
\[
u'(z)\leq u'(x_0^+)\leq u'(y), \quad \forall y<x_0<z,
\]
which is exactly \emph{Neilson’s strong loss aversion}.

\subsection{Proof of Theorem \ref{Neilson_strong_equivalent}}
\textbf{(i) $\Rightarrow$ (ii).}
Assume that for all $y<x_0<z$,
\[
u'(z)\le u'(y).
\]
Fix $\delta_1,\delta_2,\delta_3,\delta_4\ge 0$ such that
$\delta_1\neq \delta_3$, $\delta_2\neq \delta_4$, and $\delta_1-\delta_3>0$, $\delta_4-\delta_2>0$.

Define
\[
a:=x_0+\delta_4,\quad b:=x_0+\delta_2,
\qquad
c:=x_0-\delta_1,\quad d:=x_0-\delta_3.
\]
Then $a\neq b$ and $c\neq d$. By the Mean Value Theorem applied to $u$ on the interval
between $a$ and $b$ (which lies in $[x_0,\infty)$), there exists $z_\ast$ between $a$ and $b$ such that
\[
\frac{u(b)-u(a)}{b-a}=u'(z_\ast).
\]
Equivalently,
\[
\frac{u(x_0+\delta_2)-u(x_0+\delta_4)}{\delta_2-\delta_4}=u'(z_\ast).
\]

Similarly, by the Mean Value Theorem applied to $u$ on the interval between $c$ and $d$
(which lies in $(-\infty,x_0]$), there exists $y_\ast$ between $c$ and $d$ such that
\[
\frac{u(d)-u(c)}{d-c}=u'(y_\ast).
\]
Since $d-c=(x_0-\delta_3)-(x_0-\delta_1)=\delta_1-\delta_3$, this is
\[
\frac{u(x_0-\delta_3)-u(x_0-\delta_1)}{\delta_1-\delta_3}=u'(y_\ast).
\]

Because $y_\ast<x_0<z_\ast$, the assumed non-strict strong loss aversion yields
$u'(z_\ast)\le u'(y_\ast)$. Substituting the two difference quotients gives (ii):
\[
\frac{u(x_0+\delta_2)-u(x_0+\delta_4)}{\delta_2-\delta_4}
\le
\frac{u(x_0-\delta_3)-u(x_0-\delta_1)}{\delta_1-\delta_3}.
\]

\medskip
\textbf{(ii) $\Rightarrow$ (i).}
Assume (ii) holds. Fix arbitrary $y<x_0<z$ and set
\[
\bar\delta_1:=x_0-y>0,\qquad \bar\delta_4:=z-x_0>0.
\]
Choose sequences $\delta_3^{(n)}\to \bar\delta_1$ and $\delta_2^{(n)}\to \bar\delta_4$
such that $\delta_3^{(n)}\neq \bar\delta_1$ and $\delta_2^{(n)}\neq \bar\delta_4$ for all $n$.
Apply (ii) with
\[
\delta_1=\bar\delta_1,\quad \delta_4=\bar\delta_4,\quad
\delta_3=\delta_3^{(n)},\quad \delta_2=\delta_2^{(n)}.
\]
Then
\[
\frac{u(x_0+\delta_2^{(n)})-u(x_0+\bar\delta_4)}{\delta_2^{(n)}-\bar\delta_4}
\le
\frac{u(x_0-\delta_3^{(n)})-u(x_0-\bar\delta_1)}{\bar\delta_1-\delta_3^{(n)}}.
\]
As $n\to\infty$, differentiability on each side implies the left-hand side converges to
$u'(x_0+\bar\delta_4)=u'(z)$ and the right-hand side converges to $u'(x_0-\bar\delta_1)=u'(y)$.
Hence,
\[
u'(z)\le u'(y).
\]
Since $y<x_0<z$ were arbitrary, (i) holds.

\end{appendices}

\bibliographystyle{IEEEtran}
\bibliography{my_bibliography}

\end{document}